\newtheorem{lem}{Lemma}
\newtheorem{thm}{Theorem}
\newcommand{{\convd}}{{\ \buildrel d \over \longrightarrow \ }}
\newcommand{\be}{\begin{equation}}
\newcommand{\ee}{\end{equation}}
\begin{document}

\bibliographystyle{apalike}


\title{Direction-Projection-Permutation for High Dimensional Hypothesis Tests}
\author[a,1]{Susan Wei}
\author[a,2]{Chihoon Lee}
\author[b,3]{Lindsay Wichers}
\author[a]{Gen Li}
\author[a]{J.S. Marron}

\affil[a]{Department of Statistics and Operations Research, University of North Carolina - Chapel Hill.}
\affil[b]{Department of Environmental Sciences and Engineering, School of Public Health, University of North Carolina - Chapel Hill.}

\maketitle

\footnotetext[1]{To whom correspondence should be addressed. Address: Department of Statistics and Operations Research, 318 Hanes Hall, CB\# 3260, UNC Chapel Hill, NC 27599-3260, USA. Phone: 626-375-0351. Email: susanwe@live.unc.edu}
\footnotetext[2]{Current Affilation: Department of Statistics, 221 Statistics Building, Colorado State University, Fort Collins, CO 80523-1877, USA. }
\footnotetext[3]{Current Affiliation: Environmental Media Assessment Group, MD B243-01, National Center for Environmental Assessment, Office of Research and Development, U.S. Environmental Protection Agency, Research Triangle Park, North Carolina 27711, USA.}

\begin{abstract}

Motivated by the prevalence of high dimensional low sample size datasets in modern statistical applications, we propose a general nonparametric framework, Direction-Projection-Permutation (DiProPerm), for testing high dimensional hypotheses. 
The method is aimed at rigorous testing of whether lower dimensional visual differences are statistically significant. 
Theoretical analysis under the non-classical asymptotic regime of dimension going to infinity for fixed sample size reveals that certain natural variations of DiProPerm can have very different behaviors. An empirical power study both confirms the theoretical results and suggests DiProPerm is a powerful test in many settings. Finally DiProPerm is applied to a high dimensional gene expression dataset. 

\end{abstract}

\noindent
Keywords: Behrens-Fisher problem; Distance Weighted Discrimination; Fisher's Linear Discrimination; high dimensional hypothesis test; high dimensional low sample size; linear binary classification;  Maximal Data Piling; permutation test; Support Vector Machine; two-sample problem.

\section{Introduction}
\label{Introduction}

We propose a nonparametric procedure for testing high dimensional hypotheses that is especially practical in high dimensional low sample size (HDLSS) settings. HDLSS data sets arise in many modern applications of statistics, including genetics, chemometrics, and image analysis. 
An intuitive approach to looking for differences between two high dimensional distributions is by looking for differences between their one dimensional projections onto some appropriate direction. DiProPerm is a three-stepped procedure based on this idea.  The procedure is as follows:
\begin{enumerate}
\item[1.] Direction --- take the normal vector to the separating hyperplane of a binary linear classifier trained on the class labels.
\item[2.] Projection --- project data from both samples onto this direction and calculate a univariate two-sample statistic. An illustration of this can be seen in the first panel of Figure \ref{introFig}.
\item[3.] Permutation --- assess the significance of this univariate statistic by a permutation test. Namely, (a) pool the two samples and permute the class labels; (b) take the normal vector to the  binary linear classifier retrained on the permuted class labels; (c) project data onto this direction and re-calculate the univariate two-sample statistic. An illustration of this can be seen in the last three panels of Figure \ref{introFig}. For a level $\alpha$ test, we reject the null if the original test statistic is among the $100\alpha\%$ largest of the permuted statistics.
\end{enumerate}
DiProPerm is not a single test but a general hypothesis testing framework. The number of combinations of direction and univariate statistic is large. We will focus on a select few in this paper but more options are discussed in detail in Section \ref{impl} of the Supplement.

In general, we are interested in testing the hypotheses: 1) equality of two distributions and 2) equality of means. That any DiProPerm test is an exact level $\alpha$ test for equality of distributions follows immediately from general permutation test theory. A perhaps surprising point is that for testing equality of means, validity does not hold for some natural versions of DiProPerm. In this paper we study the theoretical properties of two particular DiProPerm tests. We will show that one is valid for testing equality of means while the other is not. 

\subsection{A Motivating Example}
\label{ACE}

Lower dimensional projections in directions of interest are often used to understand structure in high dimensional data. One example is the directions found by applying Principal Component Analysis (PCA), see \cite{Jolliffe2002} for an excellent introduction, which yields directions maximizing variation. When there are two classes however, as in the case we are studying, additional insights come from directions based on binary linear classifiers, where a binary classification decision is based on the value of a linear combination of the data features. 

In very high dimensions many linear classifiers over-fit. Here is a simple example illustrating this. Draw two independent samples, each of size 50, from the 1000-variate standard Gaussian distribution. We use the Distance Weighted Discrimination (DWD) direction in step 1 of DiProPerm \citep{Marron2007}. DWD is a binary linear classifier similar to the Support Vector Machine (SVM) with certain advantages in high dimensions, see  \cite{Cortes1995} for an introduction to SVM. 

The first panel of Figure \ref{introFig} shows the one dimensional projection of the data onto the DWD direction trained on the original class labels. Colors are used to represent original class membership and are thus constant throughout the first three panels. The projections are jittered on the y-axis to allow easy visualization. A kernel density estimate of the projections is plotted in the background (solid black line). We see that the projections in the first panel of Figure \ref{introFig} are very well separated despite the fact that the samples arise from the same underlying distribution. This clear over-fitting artifact common in HDLSS data is a strong motivation for DiProPerm.

\begin{figure}[h!]
\centering
\includegraphics[width=.9\textwidth]{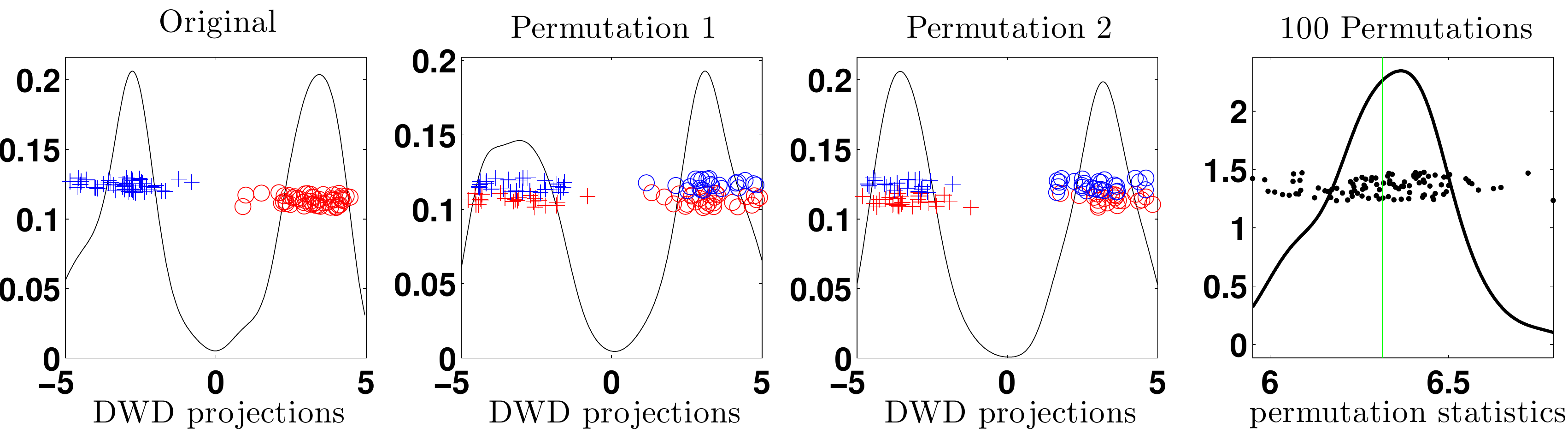}
\caption{The data are standard 1000-variate Guassian. In the first panel, the DWD direction is trained on the original class labels, represented by colors (same in all panels). In the second and third panels, the DWD directions are trained on realizations of randomly permuted class labels, represented by symbols (different in each panel). The separation in the first panel is comparable to that in the second and third panels. One hundred permutation statistics resulting from a DiProPerm test are shown in the last panel which confirms the separation in the first panel is not significant.} 
\label{introFig}
\label{introFigAnswer}
\end{figure}

The middle two panels of Figure \ref{introFigAnswer} show projections of the data onto re-trained DWD directions, each based on a realization of randomly permuted class labels. Symbols are used to represent permuted class labels and are thus different in the first three panels.
We find the projections here to be well separated with respect to the symbols. Relative to the second and third panels, the original separation we observed in the first panel is quite unremarkable, suggesting that the two underlying distribution are not different. 

The last panel in Figure \ref{introFig} confirms this observation. We perform a DiProPerm test with 100 permutations and display the statistic, chosen here to be the difference of sample means, calculated for each permutation. The vertical line is the original statistic calculated on the unprojected data. We see that based on the DiProPerm test, the null hypothesis of equal distributions should not be rejected. 

\subsection{The Hypotheses}
Let $X_1,\ldots,X_{m}$ and $Y_1,\ldots, Y_{n}$ be independent random samples of $\mathbb{R}^d$-valued random vectors, $d \ge 1$ with distributions $F_1$ and $F_2$, respectively. We are interested in testing the null hypothesis of equality of distributions
\begin{equation}
\label{strongNull}
H_0: F_1=F_2 \quad \text{ versus } \quad 
H_1: F_1\ne F_2
\end{equation}
Let $\mu(F)$ denote the mean of a distribution $F$. Another item of interest is to test the weaker null hypothesis of equality of means
\begin{equation}
\label{weakNull}
H_0: \mu(F_1) = \mu(F_2) \quad \text{ versus } \quad 
H_1: \mu(F_1) \ne \mu(F_2)
\end{equation}
Note that the multivariate Behrens-Fisher problem concerns testing \eqref{weakNull} under normality. 

\subsection{Overview}
The outline for the paper is as follows. A review of related work is presented in Section \ref{relwork}. In Section \ref{alg}, two DiProPerm tests are closely examined. HDLSS asymptotics are used to investigate the validity of these two tests for the weaker null hypothesis of equality of means in Section \ref{validity}. In Section \ref{power} we perform a Monte Carlo power study comparing DiProPerm to other methods. Finally in Section \ref{App}, DiProPerm is applied to a real microarray dataset.

\section{Related work}
\label{relwork}
There is extensive literature on testing equality of distributions for two multivariate distributions under the classical setting of sample size larger than dimension. For the more challenging HDLSS setting, several methods have been developed and we discuss two of them here. 

First, there are nearest neighbor tests \cite{Bickel1983, Henze1988, Schilling1986} which are based on nearest neighbor coincidences - the number of neighbors around a data point that belong to the same sample. The null distribution of the test statistic can be derived parametrically using normal theory or nonparametrically using a permutation test. A more recent contribution to testing equality of distributions under HDLSS settings is Szekely and Rizzo's nonparametric energy test \citep{Szekely2004}. The energy test statistic is based on the Euclidean distance between pairs of sample elements. Here significance is accessed through permutation testing. 

The nearest neighbor test and the energy test require calculation of all pairwise distances between sample elements. The computational complexity of both tests is independent of dimension, and is thus suitable for the HDLSS setting. In Section \ref{empStudyDist} we perform an empirical power study comparing DiProPerm to the energy test. 

For testing equality of means for two multivariate distributions, the classical Hotelling $T^2$ test is often used in the setting of sample size larger than dimension. However, the Hotelling $T^2$ statistic is not computable in HDLSS situations because the covariance matrix is not of full rank. To address this issue, the methods in \citep{Bai1996},\citep{Chen2010}, and \citep{Srivastava2008} replace  the covariance matrix in the Hotelling $T^2$ statistic by a diagonalized version. 

Taking a different approach, the method proposed by Lopes et al. projects the high dimensional data onto a random subspace of low enough dimension so that the traditional Hotelling $T^2$ statistic may be used \citep{Lopes2011}. All of these tests have the disadvantage that equal covariances are assumed, which is not a restriction we place on DiProPerm. In Section \ref{empStudyMeans} we perform an empirical power study comparing DiProPerm to the Random Projection test proposed in \cite{Lopes2011}. 


\section{The Choice of The Univariate Statistic}
\label{alg}

Here, we study the difference between two particular choices of the univariate statistic in Step 2 of DiProPerm. First, let the Mean Difference (MD) direction be the vector connecting the centroids of each sample. For simplicity, we will use this particular direction to compare two natural statistics of the projections: 1) the Mean Difference (MD) statistic --- the difference of sample means, and 2) the two-sample t-statistic (t) --- difference of sample means divided by $\{s_1/m + s_2/n\}^{1/2}$ where $s_1$ and $s_2$ are sample standard deviations of each class, sized $m$ and $n$ respectively. Henceforth we specify different DiProPerm tests by concatenating the direction name and two sample univariate statistic name. Following this convention, the DiProPerm test that uses the MD direction and the MD statistic will be referred to as the \textbf{MD-MD} test and the DiProPerm test that uses the MD direction and the two-sample t statistic as the \textbf{MD-t} test. 

%

We provide a toy example to contrast the difference between the MD and t statistic. We draw independent samples, each of size 50, where the first sample arises from the 1000-variate standard Gaussian distribution and the second the 1000-variate distribution with iid marginal $t(5)$ distributions. Note that the samples arise from \emph{different} distributions that have the \emph{same} means. Figure \ref{algProjs} shows the one dimensional projection of the data onto various MD directions and the MD and t statistic applied to these projections. 
\begin{figure}[h]
\begin{center}
\includegraphics[width=\textwidth]{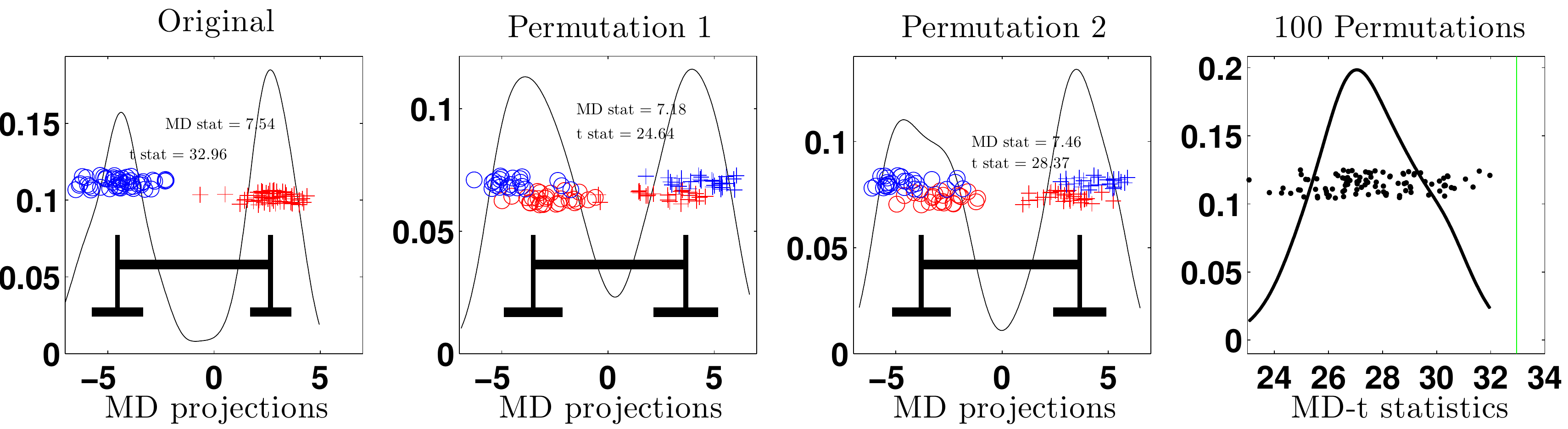}
\caption{The first sample arises from a standard 1000-variate Gaussian distribution and the second sample arises from the 1000-variate distribution with iid $t(5)$ marginals. In the first panel, the MD direction is trained on the original class labels, represented by colors. In subsequent panels, the MD direction is trained on realizations of permuted class labels, represented by symbols. Note that the MD-MD statistic is similar across the first three panels while the MD-t statistic is much larger in the first panel. One thousand permutation MD-t statistics are shown in the last panel. The empirical p-value is small suggesting the test that uses MD-t would reject the null.} 
\label{algProjs}
\end{center}
\end{figure}

The lengths of the longer horizontal black bars represent the MD statistics while the lengths of the shorter horizontal bars represent the sample standard deviations of the projected data in each permuted group. The MD statistic and two-sample t statistic calculated on the projected data are displayed towards the top of each panel. We see that the t-statistic in the first panel is much higher than the permuted t-statistics in the second and third panels. On the other hand, the MD statistic is about the same between the original and permuted worlds. We confirm this is a systematic pattern by looking at 1000 permutations and calculating the MD-t statistic. The distribution of the permuted MD-t statistics can be seen in the last panel of Figure \ref{algProjs}. We see that the original MD-t statistic, represented as a vertical line, is among the larger permutation statistics, leading us to reject the null hypothesis. The distribution of the MD-MD permutation statistics, not shown here, looks very similar to the last panel of Figure \ref{introFigAnswer}, where the original statistic is close to the middle of the permutation distribution. Thus under this setting the MD-t test rejects the null while the MD-MD does not.

This apparent inconsistency is due to the fact that the MD and t statistics are actually testing different hypotheses. The former is testing the weak hypothesis of equality of means while the latter is testing the strong hypothesis of equality of distributions. In light of this, each test is correct in its decision. This phenomenon is studied in detail in the next section.

\section{Hypothesis Test Validity}
\label{validity}
In this section, we study the validity of the MD-MD and the MD-t for testing 1) equality of distributions and 2) equality of means. We work with the MD direction because it is most amenable to theoretical analysis. Future work will include other directions such as DWD, SVM, etc. High dimensional geometric representation of SVM and DWD described in \cite{Bolivar-Cime2013} could provide the basis for this endeavor.

That both the MD-MD and the MD-t are exact tests for equality of distributions follows from standard theory on permutation tests. We will discuss how an exact level $\alpha$ test can be constructed by a permutation test. Let $N=m+n$ and write 
$
Z=(Z_1,\ldots,Z_N) = (X_1,\ldots,X_m,Y_1,\ldots,Y_n)
$
for the pooled sample. Let $\{\pi(1),\ldots,\pi(N)\}$ be a permutation of $\{1,\ldots,N\}.$  Write 
$
Z_\pi=(Z_{\pi(1)},\ldots,Z_{\pi(N)})
$ 
for the permuted  sample. Let $G_N$ denote the set of all permutations $\pi$ of $\{1,\ldots,N\}$. Then for any test statistic $V_{m,n}=V_{m,n}(Z_1,\ldots,Z_N)$, we can calculate $V_{m,n}(Z_{\pi(1)},\ldots,Z_{\pi(N)})$ for all $\pi \in G_{N}$. 
%
%
The test that rejects the null if the original statistic $V_{m,n}(Z_1,\ldots,Z_N)$ is larger than $(1-\alpha) 100 \%$ of the permuted statistics $V_{m,n}(Z_{\pi(1)},\ldots,Z_{\pi(N)})$ is an exact level $\alpha$ test. The exactness comes from the fact that the unconditional distribution and the permutation distribution of the statistic coincide under the null of equal distributions. It follows that the MD-MD test and the MD-t test, and any other DiProPerm test, are exact for testing equality of distributions. 

The matter of establishing validity for testing equality of means is not as straightforward on the other hand. In general, permutation tests cannot be expected to be valid for testing weaker hypotheses such as equality of means. For instance, if the covariances are \emph{not} the same, we have to be very careful with our choice of direction and two-sample statistic. The signal in the covariances may confound our interpretation of tests that are sensitive to both the signal in the mean and the signal in the variances. This is consistent with our results which show that under normality and balanced sample sizes, the MD-MD remains valid for testing equality of means under heterogeneous covariances. On the other hand, the MD-t is invalid when the covariances are not the same.

\subsection{MD-MD}
\label{valMDMD}

In this section, we establish that the MD-MD test is an exact test for equality of means under normality and balanced sample sizes. The MD-MD test statistic, $T_{m,n}(Z)$, is the mean of the projections of the $X$'s onto the unit vector in the direction of $\bar X - \bar Y$ minus the mean of the projections of the $Y$'s onto the unit vector in the direction of $\bar X - \bar Y$:
\begin{align}
T_{m,n}(Z) &= T_{m,n}(X_1,\ldots,X_m,Y_1,\ldots,Y_n) \\
&=\frac{1}{m} \sum_{i=1}^m X_i' \frac{(\bar X - \bar Y)}{||\bar X - \bar Y||} - \frac{1}{n} \sum_{j=1}^n Y_j' \frac{(\bar X - \bar Y)}{||\bar X - \bar Y||} \\
&= || \bar X - \bar Y ||
\label{MD MD}
\end{align}

\begin{thm}
Let $X_1, \ldots, X_m$ be an iid sample from the d-variate Gaussian distribution $N(\mu_X, \Sigma_x)$ and $Y_1,\ldots,Y_n$ be an independent sample drawn iid from the d-variate Gaussian distribution $N(\mu_Y,\Sigma_y)$ where $\Sigma_X \ne \Sigma_Y$. If $m=n$ then the unconditional distribution and the permutation distribution of $T_{m,n}(Z)$ are equal under the null $\mu_X = \mu_Y$.
\label{MDMDthm}
\end{thm}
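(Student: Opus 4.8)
The plan is to prove the stronger statement that, under $\mu_X=\mu_Y$, \emph{every} permuted statistic $T_{m,n}(Z_\pi)$, $\pi\in G_N$, has the same law as $T_{m,n}(Z)$; averaging over a uniformly drawn permutation then gives the asserted coincidence of the permutation distribution and the unconditional distribution of $T_{m,n}(Z)$.

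First I would exploit the structural simplification $T_{m,n}(Z)=\|\bar X-\bar Y\|$: the statistic depends on the sample only through the mean difference $D:=\bar X-\bar Y$, and for any $\pi\in G_N$ we have $T_{m,n}(Z_\pi)=\|D_\pi\|$, where $D_\pi$ is the difference of the means of the two groups (of sizes $m$ and $n$) into which $\pi$ splits the pooled sample. Next I would parametrize an arbitrary $\pi$ by the number $k$ of the original $X_i$ that get reassigned to the second group; then the first permuted group contains $m-k$ of the $X_i$ and $k$ of the $Y_j$, and the second contains $k$ of the $X_i$ and $n-k$ of the $Y_j$. Writing $D_\pi$ as a linear combination of the independent Gaussian vectors $X_1,\dots,X_m,Y_1,\dots,Y_n$ --- coefficient $1/m$ on every vector in the first group, coefficient $-1/n$ on every vector in the second --- I would read off its first two moments under the null: the coefficients on the first and second groups sum to $+1$ and $-1$ respectively, so with $\mu_X=\mu_Y$ the mean cancels, $\mathbb E[D_\pi]=0$; and by independence $\operatorname{Cov}(D_\pi)=\big(\tfrac{m-k}{m^2}+\tfrac{k}{n^2}\big)\Sigma_X+\big(\tfrac{k}{m^2}+\tfrac{n-k}{n^2}\big)\Sigma_Y$.

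The decisive point is that for $m=n$ this last expression collapses to $\tfrac{1}{m}(\Sigma_X+\Sigma_Y)$ for \emph{every} $k$, which is precisely $\operatorname{Cov}(D)=\Sigma_X/m+\Sigma_Y/n$. Two centered Gaussian vectors with the same covariance have the same law, so $D_\pi\stackrel{d}{=}D$ and therefore $T_{m,n}(Z_\pi)=\|D_\pi\|\stackrel{d}{=}\|D\|=T_{m,n}(Z)$ for all $\pi\in G_N$. Drawing $\Pi$ uniformly from $G_N$ and independently of the data then yields $T_{m,n}(Z_\Pi)\stackrel{d}{=}T_{m,n}(Z)$, the desired equality of distributions.

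I expect the work to be essentially bookkeeping rather than conceptual: the care lies in tracking how the relabeling distributes the two original samples across the two new groups and in computing the resulting covariance. The one load-bearing ingredient, and the reason balance cannot be dropped, is that the $k$-dependence of $\operatorname{Cov}(D_\pi)$ cancels only when $m=n$; for $m\ne n$ the permuted mean difference has covariance that depends on the number of mislabeled points and so fails to match $\operatorname{Cov}(D)$. Gaussianity is also indispensable at the final step, since matching two moments forces equality in law only for normal vectors.
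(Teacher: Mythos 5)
Your proposal is correct and follows essentially the same route as the paper: the paper likewise writes the permutation distribution of $\bar X-\bar Y$ as a mixture over the number $r$ of swapped labels of centered Gaussians with covariance $\frac{(m-r)\Sigma_x+r\Sigma_y}{m^2}+\frac{r\Sigma_x+(n-r)\Sigma_y}{n^2}$, and observes that when $m=n$ every component collapses to $\frac{1}{m}(\Sigma_x+\Sigma_y)=\operatorname{Cov}(\bar X-\bar Y)$. Your version merely makes explicit the per-permutation coefficient bookkeeping that the paper leaves implicit, and correctly identifies Gaussianity and balance as the two load-bearing hypotheses.
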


\begin{proof}
Under $\mu_X=\mu_Y$, $\bar X - \bar Y$ is distributed as
\begin{align}
\label{norm1}
N(0, {\Sigma_x }/{m} + {\Sigma_y }/{n}) 
\end{align}
and the permutation distribution of $\bar X - \bar Y$ is  
\begin{equation}
\label{norm2}
 \sum_{r=0}^{m} \frac{{ m \choose r} {n \choose r}}{{ N \choose m} } N\left (0, \frac{(m-r) \Sigma_x +r \Sigma_y }{m^2} +\frac{r \Sigma_x+(n-r) \Sigma_y }{n^2} \right)
\end{equation}
If $m=n$, the expressions in \eqref{norm1} and \eqref{norm2} are the same, in which case the unconditional and permutation distribution of $T_{m,n}(Z)$ are also the same. 
\end{proof}

\subsection{MD-t}
\label{valMDt}

The MD-t statistic, denoted by $U_{m,n}(Z)$, is the result of applying the unbalanced sample sizes, unequal variance two-sample t-test statistic (also known as Welch's t-test \citep{Welch1947}) to the projections onto the MD direction. Let $a \cdot b$ denote the standard dot product between two vectors in $\mathbb R^d$. The sample variances of the projected data can be expressed as
\begin{align*}
s_{\tilde X}^2 = \frac{1}{m-1} \sum_{i=1}^m [ (X_i - \bar X) \cdot (\bar X - \bar Y) ]^2 
\end{align*}
and 
\begin{align*}
s_{\tilde Y}^2 = \frac{1}{n-1} \sum_{i=1}^n [ (Y_i - \bar Y) \cdot (\bar X - \bar Y) ]^2 .
\end{align*}
Define $S_{m,n}(Z) = S_{m,n}(X_1,\ldots,X_m,Y_1,\ldots,Y_n) = { 	 {s_{\tilde{X}}^2}/{m}	 + 	{s_{\tilde{Y}}^2}/{n}  	 }$. The MD-t statistic is
\[
U_{m,n}(Z) = U_{m,n}(X_1,\ldots,X_m,Y_1,\ldots,Y_n) = {T_{m,n}(Z)^2}/{ \{	 S_{m,n}(Z) 	 \}^{1/2}	}
\]
where $T_{m,n}(Z)$ is as in Section \ref{valMDMD}. We use the term ``projected" rather loosely here since we have not normalized by $|| \bar X - \bar Y ||$. This is of no actual consequence since the two-sample t-statistic is scale invariant.

Under equal means the numerator in the MD-t statistic behaves similarly in the permutation world and the original world. However, we will see that the denominator of the MD-t statistic has very different behavior. We find that the denominator of the MD-t is larger in the permuted world, as seen in Figure \ref{algProjs}. This has the effect of making the unconditional distribution of the MD-t statistic larger than the permutation distribution. 

To gain some intuition, consider the following toy HDLSS example. Suppose we observe $X_1,X_2 \sim F_1$ and $Y_1,Y_2 \sim F_2$ where $F_1 = N(0,I_d)$ and $F_2 = N(0, \sigma^2 I_d)$, $\sigma^2 \ne 1$. The points $X_1,X_2,Y_1,Y_2$ form the vertices of a tetrahedron in three dimensional space. The two-dimensional plane generated by $Y_1, Y_2$ and $\bar X$ is shown in Figure \ref{triangle}. Distances between elements of interest are calculated using standard HDLSS asymptotics, see \cite{Hall2005} for examples of this type of calculation. All distances have an additional $O_P(1)$ term that is not shown to avoid clutter. The geometric configuration in Figure \ref{triangle} has the implication that $s_{\tilde Y}^2$ is small. To see this, note the projections of $Y_1$ and $Y_2$ onto the MD direction $\bar X-\bar Y$ is close to the projection of $\bar Y$ itself. A similar argument can be applied to show $s_{\tilde X}^2$ is small.
%
%
%
%

\begin{figure}[h!]
\centering
\includegraphics[width=.9\textwidth]{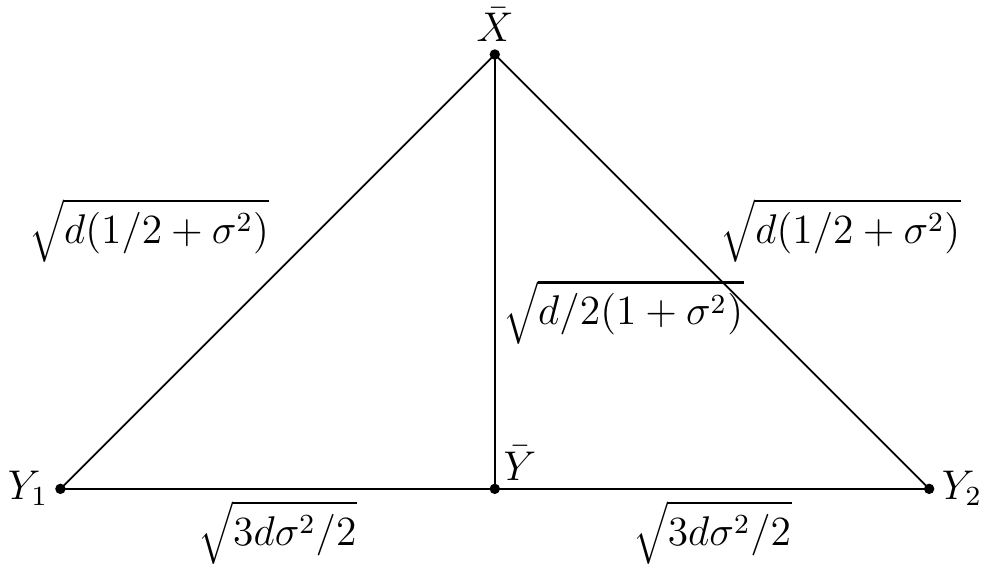}
\caption{ Plane generated by $Y_1$, $Y_2$ and $\bar X$ where $X_1,X_2 \sim F_1 = N(0,I_d)$ and  $Y_1, Y_2 \sim F_2 = N(0, \sigma^2 I_d)$ for $\sigma^2 \ne 1$. Note that the projections of $Y_1$ and $Y_2$ onto $\bar X - \bar Y$ is close to the projection of $\bar Y$ onto $\bar X - \bar Y$. This has the implication that $s_{\tilde{Y}}^2$ will be small.}
\label{triangle}
\end{figure}

Now let's look at what happens in the permutation world. Figure \ref{permTriangle} shows the two-dimensional plane generated by the realization of a random permutation where $X_1^*=X_2$, $X_2^* = Y_2$ and $Y_1^* = X_1$ and $Y_2^* = Y_1$. Notice that the distance between $Y_1^* $ and $\bar{ X^*}$ is different than the distance between $Y_2^*$ and $\bar{ X^*}$. This has the effect of making $s_{\tilde Y^*}^2$, the sample variance of the the projections of $Y_1^*$ and $Y_2^*$, large. To see this, note the projections of $Y_1^*$ and $Y_2^*$ onto the permuted MD direction are not close to the projection of $\bar Y^*$. A similar argument can be applied to show $s_{\tilde X^*}^2$, the sample variance of the projections of $X_1^*$ and $X_2^*$, is large.
The derivations for the distances shown in Figures \ref{triangle} and \ref{permTriangle} can be found in the supplement.

%
%
%
%

\begin{figure}[h!]
\centering
\includegraphics[width=.9\textwidth]{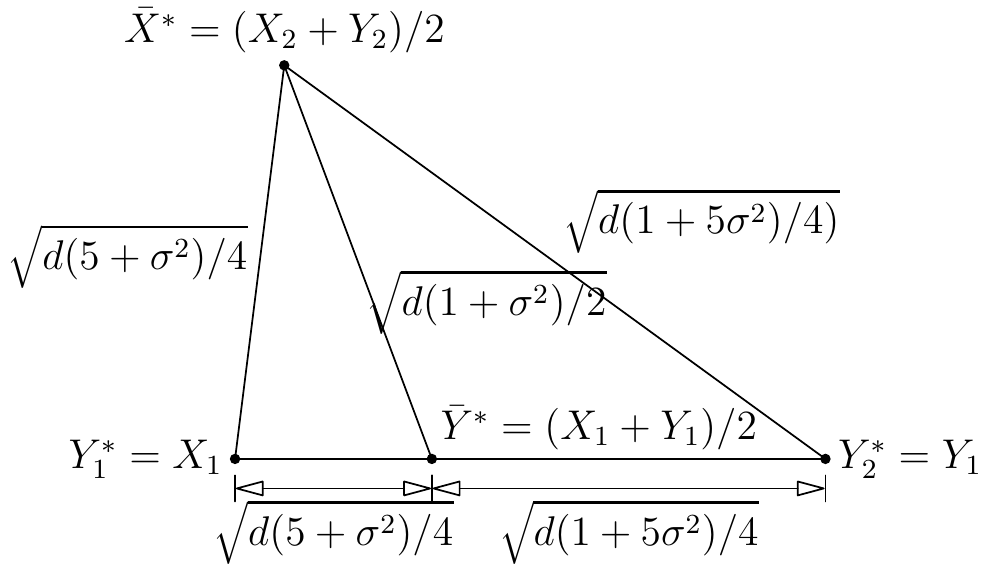}
\caption{Plane generated by a particular permutation realization of $X_1,X_2,Y_1,$ and $Y_2$. Note that the projections of $Y_1^*$ and $Y_2^*$ onto $\bar X^* - \bar Y^*$ is not close to the projection of $\bar Y^*$ onto $\bar X^* - \bar Y^*$. This has the implication that $s_{\tilde Y^*}^2$ may be large.} 
\label{permTriangle}
\end{figure}

The toy example above suggests the denominator of the MD-t statistic is larger in the permutation world than in the original world. The next result gives us a sense of just how far apart are the permutation and unconditional distributions of $S_{m,n}(Z)$.
\begin{thm}
Let $X_1, \ldots, X_m$ be a sample from the d-variate Gaussian distribution $N(\mu_x, \sigma_x^2 I_d)$ and $Y_1,\ldots,Y_n$ be an independent sample from the d-variate Gaussian distribution $N(\mu_y,\sigma_y^2 I_d)$ where $\sigma_x^2 \ne \sigma_y^2$ are scalars. Under $\mu_x = \mu_y$, we have
\[
\frac{1}{d} S_{m,n}(Z) \convd (\frac{\sigma_x^2}{m} + \frac{\sigma_y^2}{n}) \left \{\frac{1}{m-1} \frac{\sigma_x^2}{m}  \chi^2(m-1) +  \frac{1}{n-1} \frac{\sigma_y^2}{n} \chi^2(n-1) \right \}
\]
as $d$ goes to infinity.
For the permuted version, we have for some non-zero constant $c$,
\[
\frac{1}{d^2} S_{m,n}(Z_\pi) \to c \text{ in probability.}
\]
\label{MDtTheorem}
\end{thm}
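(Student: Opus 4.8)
The plan is to prove the two halves by rather different means: the unconditional limit comes from an exact distributional identity, while the permutation limit is a high-dimensional inner-product computation. In both cases I would take $\mu_x=\mu_y=0$ without loss of generality, since $S_{m,n}$ depends on the data only through the centered vectors $X_i-\bar X$, $Y_j-\bar Y$ and the difference $D:=\bar X-\bar Y$.

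For the unconditional part, I would start from the fact that under the null $D\sim N\bigl(0,(\sigma_x^2/m+\sigma_y^2/n)I_d\bigr)$, so that $\|D\|^2/d\to\sigma_x^2/m+\sigma_y^2/n$ almost surely by the law of large numbers. Next, by Gaussian independence the collection $(X_1-\bar X,\dots,X_m-\bar X)$ is independent of $\bar X$ and of the entire $Y$-sample, hence of $D$; conditionally on the unit vector $u:=D/\|D\|$, the scalars $(X_i-\bar X)\cdot u$ have exactly the law of the centered version of $m$ iid $N(0,\sigma_x^2)$ variables, so $\sum_{i=1}^m[(X_i-\bar X)\cdot u]^2\sim\sigma_x^2\chi^2(m-1)$, and since this conditional law does not depend on $u$ the $\chi^2(m-1)$ variable is in fact independent of $\|D\|$; the analogous statement holds for the $Y$'s, independently. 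Substituting $(X_i-\bar X)\cdot D=\|D\|\,(X_i-\bar X)\cdot u$ into the definition of $S_{m,n}$ then yields the exact identity
\[
S_{m,n}(Z)=\|D\|^2\left\{\frac{\sigma_x^2\,\chi^2(m-1)}{m(m-1)}+\frac{\sigma_y^2\,\chi^2(n-1)}{n(n-1)}\right\},
\]
with the bracketed factor independent of $\|D\|^2$; dividing by $d$ and applying Slutsky with $\|D\|^2/d\to\sigma_x^2/m+\sigma_y^2/n$ gives the stated limit.

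For the permutation part, write $X_1^*,\dots,X_m^*,Y_1^*,\dots,Y_n^*$ for the entries of the permuted pooled sample $Z_\pi$, fix the permutation $\pi$, and let $r$ be the number of original $Y$'s appearing among $X_1^*,\dots,X_m^*$, so that the $X^*$-group consists of $m-r$ original $X$'s and $r$ original $Y$'s while the $Y^*$-group consists of $r$ original $X$'s and $n-r$ original $Y$'s; it suffices to treat a $\pi$ for which at least one of the two permuted groups is mixed, in particular $r\ge1$. The inputs are the standard HDLSS limits for iid Gaussian vectors used in \cite{Hall2005}: $\|X_i\|^2/d\to\sigma_x^2$ and $\|Y_j\|^2/d\to\sigma_y^2$ in probability, while every cross inner product $Z_k\cdot Z_l$ with $k\ne l$ is $O_P(d^{1/2})=o_P(d)$. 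Using $(X_i^*-\bar X^*)\cdot\bar X^*=\|X_i^*\|^2/m-\|\bar X^*\|^2+o_P(d)$ together with $\|\bar X^*\|^2=\{(m-r)\sigma_x^2+r\sigma_y^2\}\,d/m^2+o_P(d)$, and noting that every term involving $\bar Y^*$ is $o_P(d)$ because the two permuted groups are disjoint index sets, one finds that $[(X_i^*-\bar X^*)\cdot(\bar X^*-\bar Y^*)]/d$ converges in probability to $r(\sigma_x^2-\sigma_y^2)/m^2$ for the $m-r$ original-$X$ members of the $X^*$-group and to $-(m-r)(\sigma_x^2-\sigma_y^2)/m^2$ for the $r$ original-$Y$ members (the two contributions cancel, as they must, since the summands sum to zero). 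Squaring, summing, and dividing by $m-1$ gives $s_{\tilde X^*}^2/d^2\to(\sigma_x^2-\sigma_y^2)^2\,r(m-r)/\{m^3(m-1)\}$, and the identical computation on the $Y^*$-group gives $s_{\tilde Y^*}^2/d^2\to(\sigma_x^2-\sigma_y^2)^2\,r(n-r)/\{n^3(n-1)\}$; hence $d^{-2}S_{m,n}(Z_\pi)\to c$ in probability, where
\[
c=(\sigma_x^2-\sigma_y^2)^2\left\{\frac{r(m-r)}{m^4(m-1)}+\frac{r(n-r)}{n^4(n-1)}\right\},
\]
which is strictly positive when $\sigma_x^2\ne\sigma_y^2$ and $r\ge1$.

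I expect the permutation half to be the main obstacle, and within it the near-cancellation of the two $\Theta(d)$ pieces of $(X_i^*-\bar X^*)\cdot\bar X^*$, namely $\|X_i^*\|^2/m$ and $\|\bar X^*\|^2$: the surviving order-$d$ term is present only because the $X^*$-group pools points of two different variances, which is exactly what disappears when $\sigma_x^2=\sigma_y^2$ (in which case one recovers the $\Theta(d)$ behavior of the unconditional half) or when $\pi$ keeps the groups homogeneous ($r=0$). Making this rigorous requires checking that, after dividing by $d^2$, every discarded term is uniformly $o_P(1)$ — the products of two $o_P(d)$ quantities, the products of an $O_P(d^{1/2})$ quantity with an $O_P(d)$ quantity, and the $O(m^2)$ cross terms inside each $\|\bar X^*\|^2$; this is routine from Gaussian tail bounds but must be tracked with care because the conclusion lives on the $d^2$ scale.
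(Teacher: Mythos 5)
Your proof is correct, and it reaches both conclusions by routes that differ from the paper's in instructive ways. For the unconditional limit the paper does not use your exact distributional identity: it instead writes $s_{\tilde X}^2$ through the recursive update $(k-1)s_k^2=(k-2)s_{k-1}^2+\tfrac{k-1}{k}(\tilde X_k-\overline{\tilde X}_{k-1})^2$, proves a coordinate-wise central limit theorem (as $d\to\infty$) for each increment $\tilde X_k-\overline{\tilde X}_{1:k-1}$, upgrades it to a scaled $\chi^2(1)$ limit via the second-order delta method, and assembles the $\chi^2(m-1)$ from the independence of the successive increments. Your observation that the centered sample $(X_i-\bar X)_{i\le m}$ is independent of $D=\bar X-\bar Y$, so that conditionally on $u=D/\|D\|$ the projected sum of squares is exactly $\sigma_x^2\chi^2(m-1)$ and independent of $\|D\|$, yields the stronger exact identity $S_{m,n}(Z)=\|D\|^2\{\cdots\}$ for every $d$; only a single law of large numbers for $\|D\|^2/d$ and Slutsky are then needed, with no CLT or delta method. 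For the permutation half the two arguments share the same mechanism --- a non-vanishing per-coordinate mean of $(Z_{\pi(i)}-\overline{Z_{\pi(1:m)}})^{(k)}\bar Z_\pi^{(k)}$ followed by a law of large numbers over the $d$ coordinates and squaring --- but you condition on the realized permutation and compute the limit $c$ explicitly in terms of the number $r$ of exchanged labels, whereas the paper's corresponding lemma computes the per-coordinate expectation averaged over a uniformly random permutation and then invokes its dot-product LLN lemma. Your explicit formula for $c$ is the more informative version: it makes visible that the claim genuinely requires the permutation to mix the two groups (one needs $r\ge 1$ and not a full swap of two equal-sized groups, since otherwise $c=0$ and $S_{m,n}(Z_\pi)$ stays on the $d$ scale), a restriction you correctly flag and which the theorem as stated, and the paper's averaged-over-$\pi$ computation, leave implicit.
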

The results of this theorem are surprising in that the denominator of the MD-t statistic is actually of different orders in the unconditional and permutation worlds. In particular, in the unconditional world $S_{m,n}(Z)$ grows like a random variable times $d$, while in the permutation world it grows like a constant times $d^2$.

Let us revisit the toy example earlier and see what Theorem \ref{MDtTheorem} can tell us. We make $50$ draws from $F_1 = N(0,I_d)$ and another $50$ independent draws from $F_2 = N(0,100 I_d)$. We show in Figure \ref{permVsOriginal}, using 1000 Monte Carlo realizations, the simulated permutation and unconditional distributions of the MD-t statistic for various dimensions. 

\begin{figure}[h]
\begin{center}
\includegraphics[width=\textwidth]{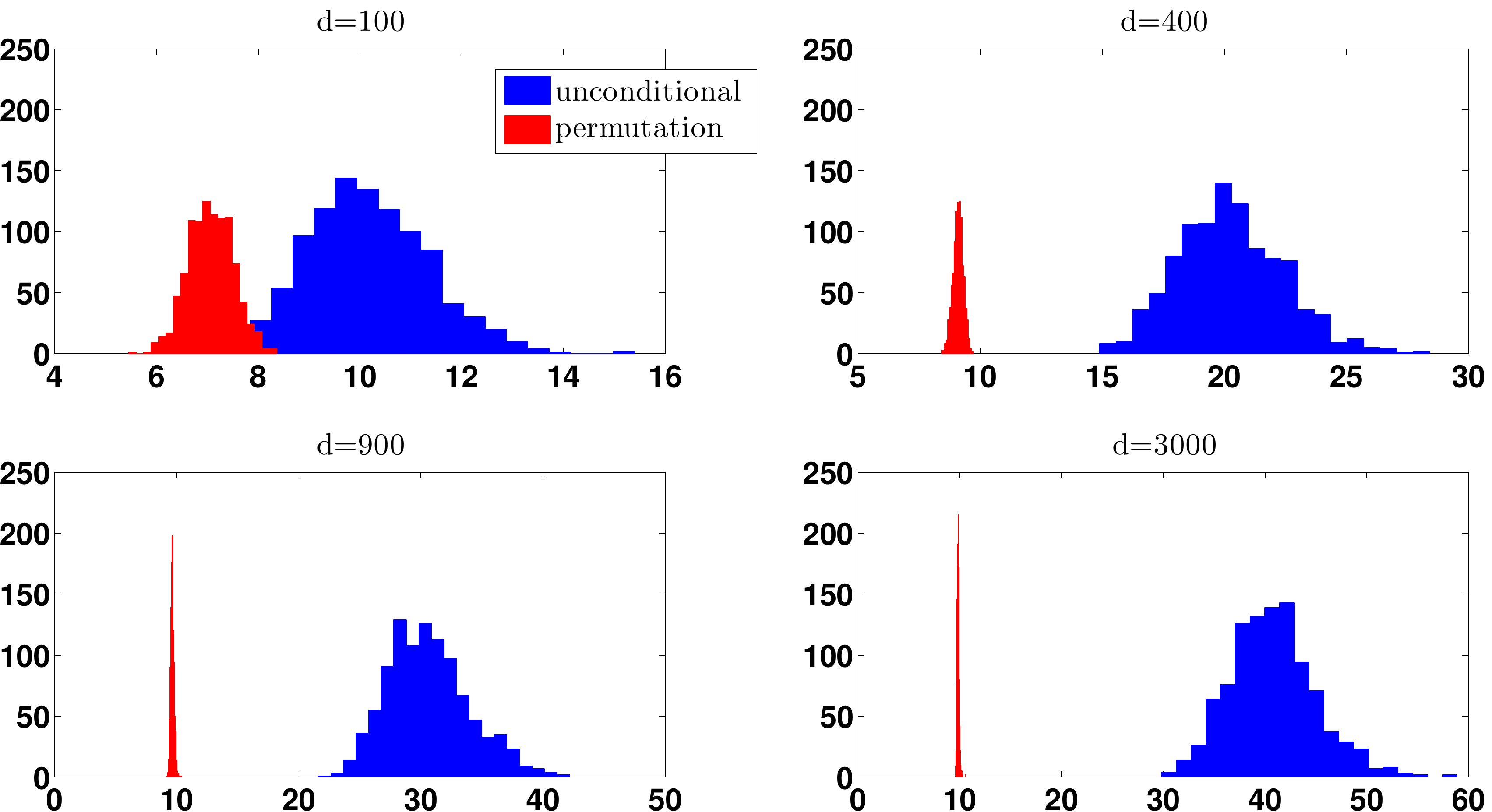}
\caption{The unconditional and permutation distribution of the MD-t statistic for the distributions $F_1 = N(0,I_d)$ and $F_2 = N(0,100 I_d)$. The separation between the unconditional and permutation distribution increases with dimension.} 
\label{permVsOriginal}
\end{center}
\end{figure}

Under the conditions in Theorem \ref{MDtTheorem}, when $\mu_x = \mu_y$, the numerator of the MD-t statistic is proportional to a $\chi^2(d)$ variable for both the unconditional and permutation distribution. On the other hand, by the results in Theorem \ref{MDtTheorem} $S_{m,n}(Z)$ is of the order $\sqrt d$ and $d$ for the unconditional and permuted distributions, respectively. Thus we should expect the MD-t statistic to be of the order $\sqrt d$ in the original unconditional world and $1$ in the permutation world. This is consistent with Figure \ref{permVsOriginal} --- the unconditional distribution is centered around $\sqrt d$ while the permutation distribution is not growing with $d$. As Figure \ref{permVsOriginal} illustrates, the unconditional distribution quickly separates from the permutation distribution as dimension increases. Thus it is very important that the MD-t statistic not be used when the goal is to test for equality of means. On the other hand, this shows the MD-t test has some power for testing equality of distributions against equal means alternatives. 

\subsection{Power surfaces}
\label{powersufs}

In this section, we study the  power of the MD-MD and MD-t for testing equality of means. In the simulations that follow, we make $m$ draws from $F_1=N(\mu_1,\sigma_1^2 I_d)$, and $n$ independent draws from $F_2=N(0,I_d)$. We set $d=500$ and $m=n=50$ for balanced sample sizes and $m=50,n=100$ for unbalanced. The dimension $d$ and sample sizes $m$ and $n$ are chosen to reflect a HDLSS setting. The significance level is set at $\alpha = 0.05$. Power is estimated using $1000$ Monte Carlo simulations.  Figure \ref{powerMdt} displays a 3D surface of power versus $\mu_1$ versus $\sigma_1^2$, using a color spectrum from cool to warm corresponding to the range 0 to 1. We also show an image underneath the surface where each pixel corresponds to the point in the 3D surface above.

Figure \ref{equalPowerMDMD} displays the estimated power surface of MD-MD under balanced sample sizes. By Theorem \ref{MDMDthm}, MD-MD is an exact test for equality of means under balanced sample sizes and normality. This is consistent with what we see in Figure \ref{equalPowerMDMD} --- when the means are equal (i.e. $\mu_1=0$), the power is around $\alpha=0.05$, as indicated by the streak at $\mu_1=0$. 
When sample sizes are unbalanced, see Figure \ref{equalPower} in Section \ref{MDsMD} of the Supplement, the MD-MD is no longer an exact test and may not even be asymptotically valid as $d \to \infty$. In Section \ref{MDsMD} of the Supplement, we propose a modification of MD-MD that should be used when sample sizes are unbalanced.

\begin{figure}[h]
\centering
\subfloat[MD-MD: Balanced]{
\includegraphics[width=.3\textwidth]{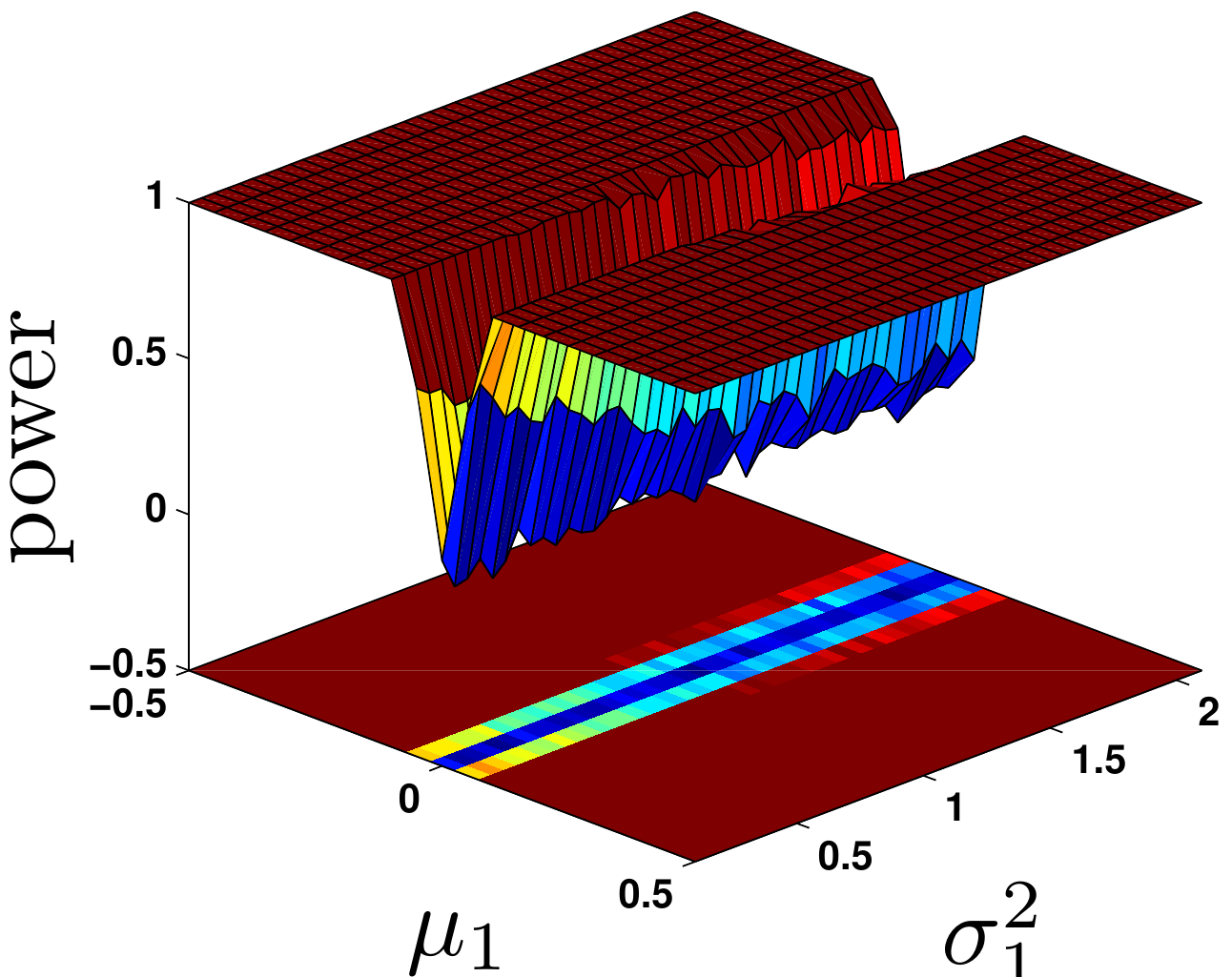}
\label{equalPowerMDMD}
}
\subfloat[MD-t: Balanced ]{
\includegraphics[width=.3\textwidth]{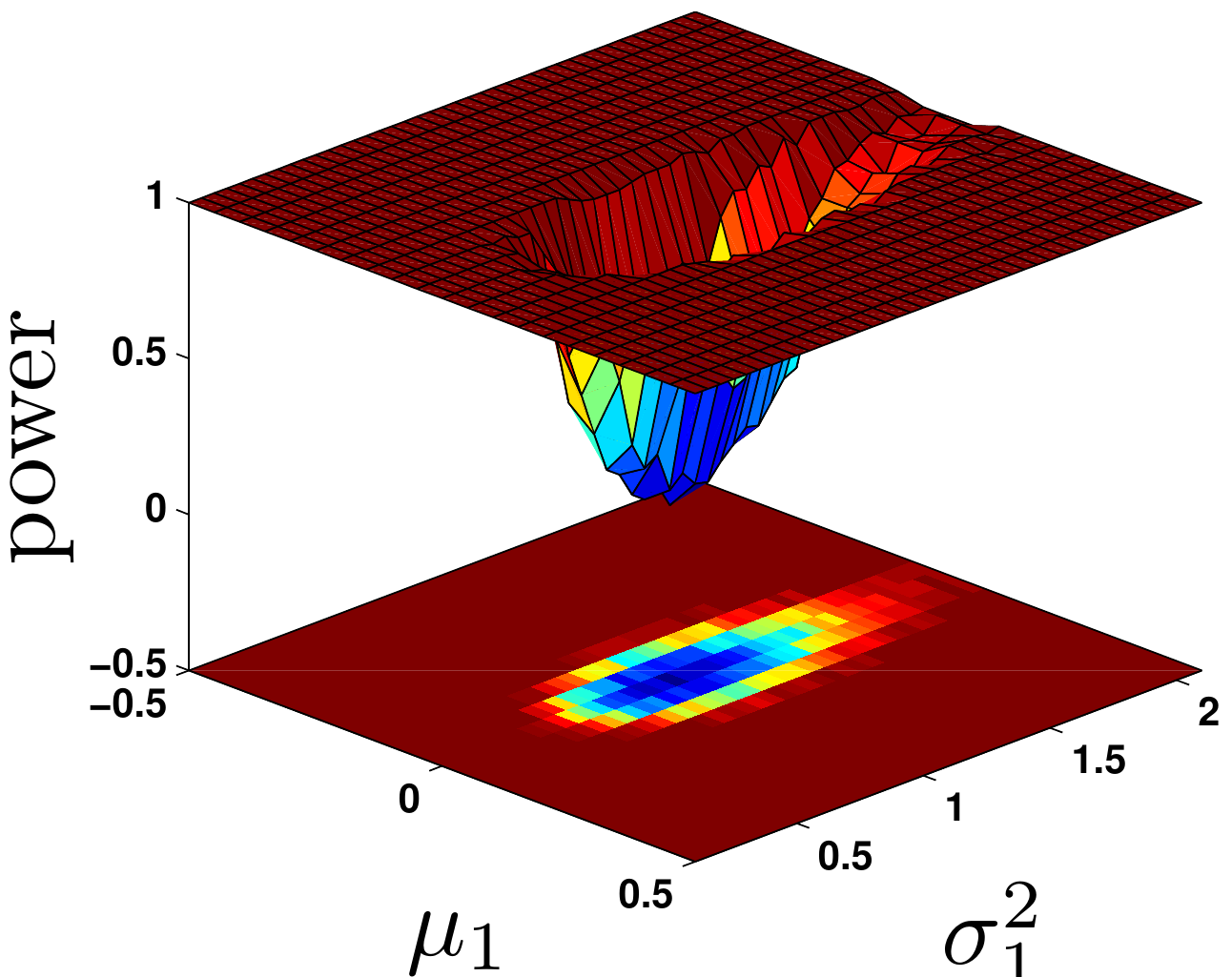}
\label{equalPowerMDt}
}
\subfloat[MD-t: Unbalanced]{
\includegraphics[width=.3\textwidth]{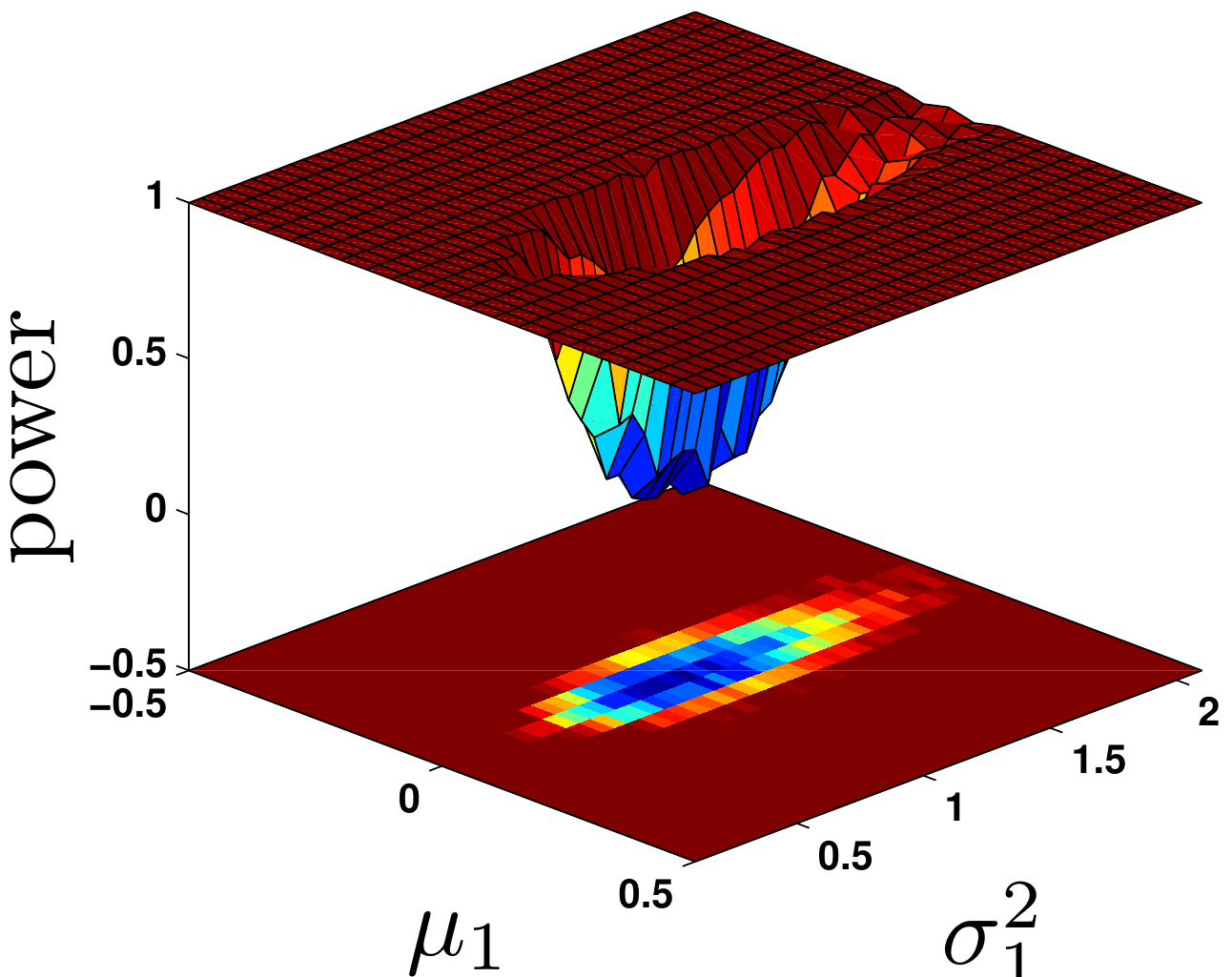}
\label{unequalPowerMDt}
}
\caption{Power surfaces for testing equality of means of the distributions $F_1=N(\mu_1,\sigma_1^2I_d)$ and $F_2=N(0,I_d)$. We see that MD-MD attains the correct level under balanced sample sizes. The MD-t test is not valid for testing equality of means regardless of balanced or unbalanced sample sizes.}
\label{powerMdt}
\end{figure}

Figures \ref{equalPowerMDt} and \ref{unequalPowerMDt} show that under heterogeneous covariances (when $\sigma_1^2 \ne 1$), the MD-t test of equal means does not attain the correct level for either balanced or unbalanced sample sizes.  In the immediate region around $(\mu_1,\sigma_1^2)=(0,1)$, the power of the MD-t test is close to $\alpha$ as expected. However as we move away from $(\mu_1,\sigma_1^2)=(0,1)$, the power quickly increases. Thus if we use the MD-t test for equality of means, we will reject too often. On the other hand this shows that the MD-t test has some power for testing equality of distributions against alternatives where the means are equal but the distributions are not.

\section{Comparison with Other Methods}
\label{power}

In this section we compare DiProPerm to other methods in the simulation contexts described in Table \ref{databank1}. 
First, for testing equality of distributions,  we compare the DiProPerm tests DWD-t and MD-t to the energy test proposed by Szekely and Rizzo \citep{Szekely2004}.  Next, for testing equality of means, we compare the DiProPerm tests DWD-MD and MD-MD to the Random Projection test proposed by Lopes, Jacob and Wainwright \citep{Lopes2011}. Our simulation results show that no test is universally most powerful. As such, our goal is to learn general lessons about the situations under which each method can be expected to do well. 

\begin{table}[h]
\centering
\begin{tabular}{|p{3cm}|p{5cm}|p{5cm}|}
\hline 
Simulation  & Sample 1 & Sample 2 \\ 
\hline 
S1 & $N(0,I_{d})$ & $t(5)^{d}$  \\ 
\hline 
S2 & $N(0,\Sigma_B)$ & $N(\mu,\Sigma_B)$ \\ 
\hline 
S3 & $N( [3,30,0,\ldots,0],I_d)$  $N( [3,-30,0,\ldots,0],I_d)$ & $N( [-3,30,0,\ldots,0],I_d)$  $N( [-3,-30,0,\ldots,0],I_d)$ \\ 
\hline 
\end{tabular} 
\caption{Simulation settings. The notation $N(\mu,\Sigma)$ denotes a multivariate Gaussian distribution with mean $\mu$ and covariance $\Sigma$. In S1, the notation $t(5)^d$ denotes the $d$-variate distribution with iid marginal distribution $t(5)$. In S2, the first 25\% of the coordinates in $\mu$ are zero and the rest are set to $1/\sqrt{n}$. The covariance matrix $\Sigma_B$ has a block structure (described further in the text). In S3, each distribution is an equally weighted Gaussian Mixture of the components listed.}
\label{databank2}
\label{databank1}
\end{table} 

Simulation S1 in Table \ref{databank1} was taken from Szekely and Rizzo. Simulation S2 is a modification of a simulation found in Lopes, Jacob, and Wainwright. Following their simulation setting, we let the covariance matrix $\Sigma_B$ be block-diagonal with identical blocks $B \in R^{5 \times 5}$ along the diagonal. The matrix $B$ has diagonal entries equal to 1 and off-diagonal entries equal to $0.2$. The mean vector is set to the zero vector in sample 1. In the second sample, the mean vector is set to zero in the first 25\% of the coordinates and the rest is set to $1/\sqrt n$. Simulation S3 looks at data arising from equally weighted Gaussian mixtures with the components listed in Table \ref{databank1}. All DiProPerm tests are implemented using $1000$ permutations. Power is estimated through 1000 Monte Carlo simulations at $0.1$ significance level. In Figures \ref{testEqualDist} and \ref{EqualMeans1} we display the power against a range of dimensions.

\subsection{Equality of distributions}
\label{empStudyDist}

The energy statistic is based on the Euclidean distance between pairs of sample elements. The two-sample test statistic is 
\[
\epsilon_{m,n} = \frac{mn}{N} \left( \frac{2}{mn} \sum_{i=1}^m \sum_{j=1}^n ||X_i - Y_j|| - \frac{1}{m^2}  \sum_{i=1}^m \sum_{j=1}^m ||X_i - X_j|| - \frac{1}{n^2}  \sum_{i=1}^n \sum_{j=1}^n ||Y_i - Y_j||  		\right)
\]
The first term measures the average distance between the samples and the last two terms measure the average distance within each sample. The significance of the energy test statistic is assessed using a permutation test. In our implementation of the energy test, we used $1000$ permutations. 

For all simulations in Figure \ref{testEqualDist}, the sample sizes are set to be unbalanced: $m=50, n=150$. Figure \ref{testEqualDist} compares the power of MD-t, DWD-t, and the energy test for testing equality of distributions. The first panel shows the result of simulation S1. The standard Gaussian and $t(5)^d$ both have mean zero but different covariances. Note that the signal in the covariance grows stronger with dimension. In light of this, it is not surprising that the MD-t and DWD-t do not perform as well as the energy test which is more attuned to variance effects. However, as the dimension increases all three tests attain full power. 

The second panel of Figure \ref{testEqualDist} shows the results for simulation S2. All three tests perform well with power increasing to 1 with dimension. Note that the mean effect is along the 45 degree line. The structure of $\Sigma_B$ has the implication that the directions with highest variation are for some constant $c$, $(c,c,c,c,c,0,\ldots,0)$, $(0,0,0,0,0,c,c,c,c,c,0,\ldots,0)$, and etc. Thus the mean effect is further exaggerated by the covariance structure making this a rather unchallenging setting for all three methods.

The result of simulation S3 is shown in the last panel of Figure \ref{testEqualDist}. Here, both the DiProPerm DWD-t and MD-t test are seen to be more powerful than the energy test. This is not surprising since by way of its construction, the energy test can be expected to have difficulty in separating Gaussian mixture data types. The MD-t has good performance but DWD-t has the best power because DWD was developed to handle Gaussian mixture data types.

\begin{figure}[h]
\centering
\subfloat[S1]{
\includegraphics[width=.3\textwidth]{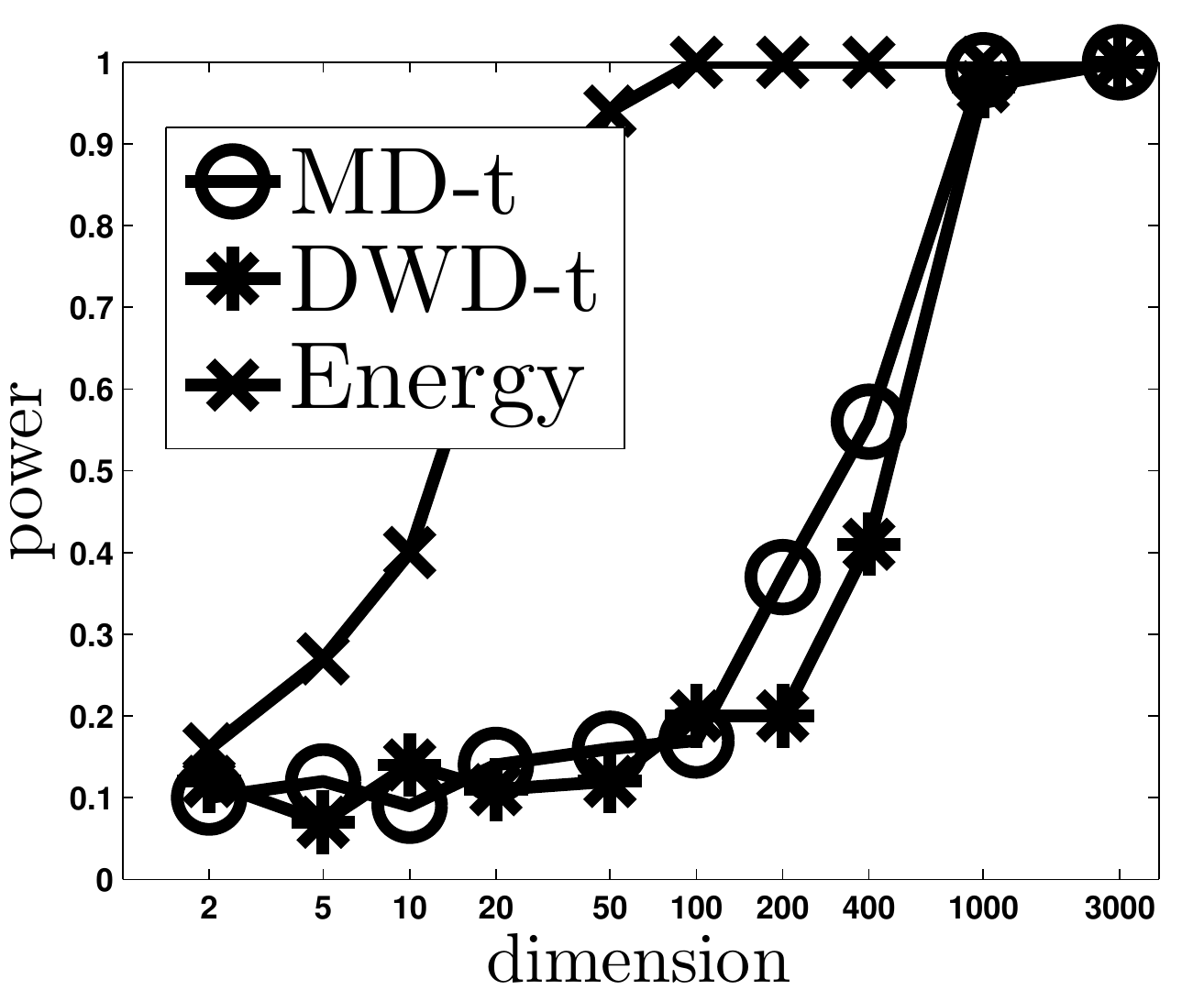}
} 
\subfloat[S2]{
\includegraphics[width=.3\textwidth]{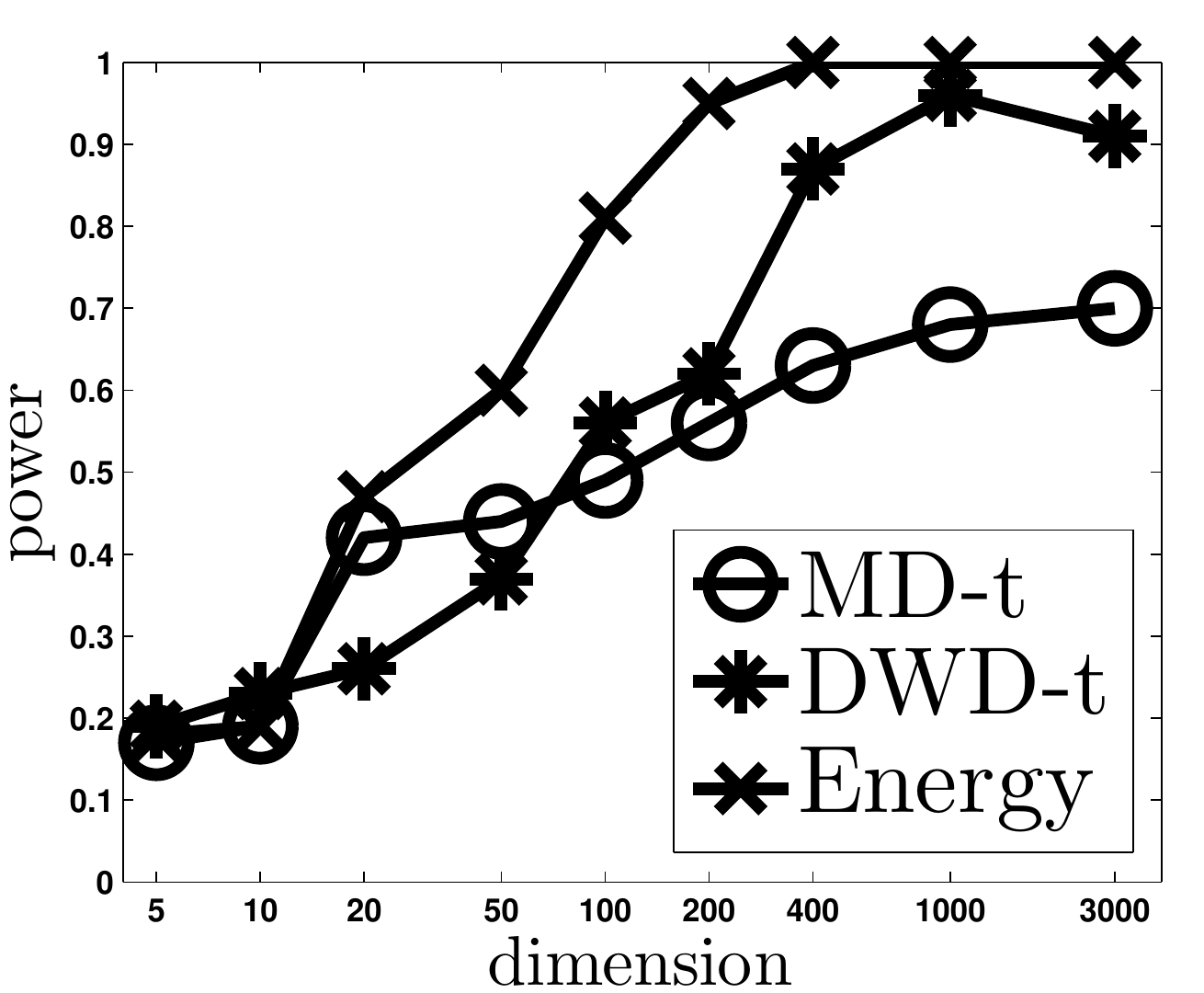}
}
\subfloat[S3]{
\includegraphics[width=.3\textwidth]{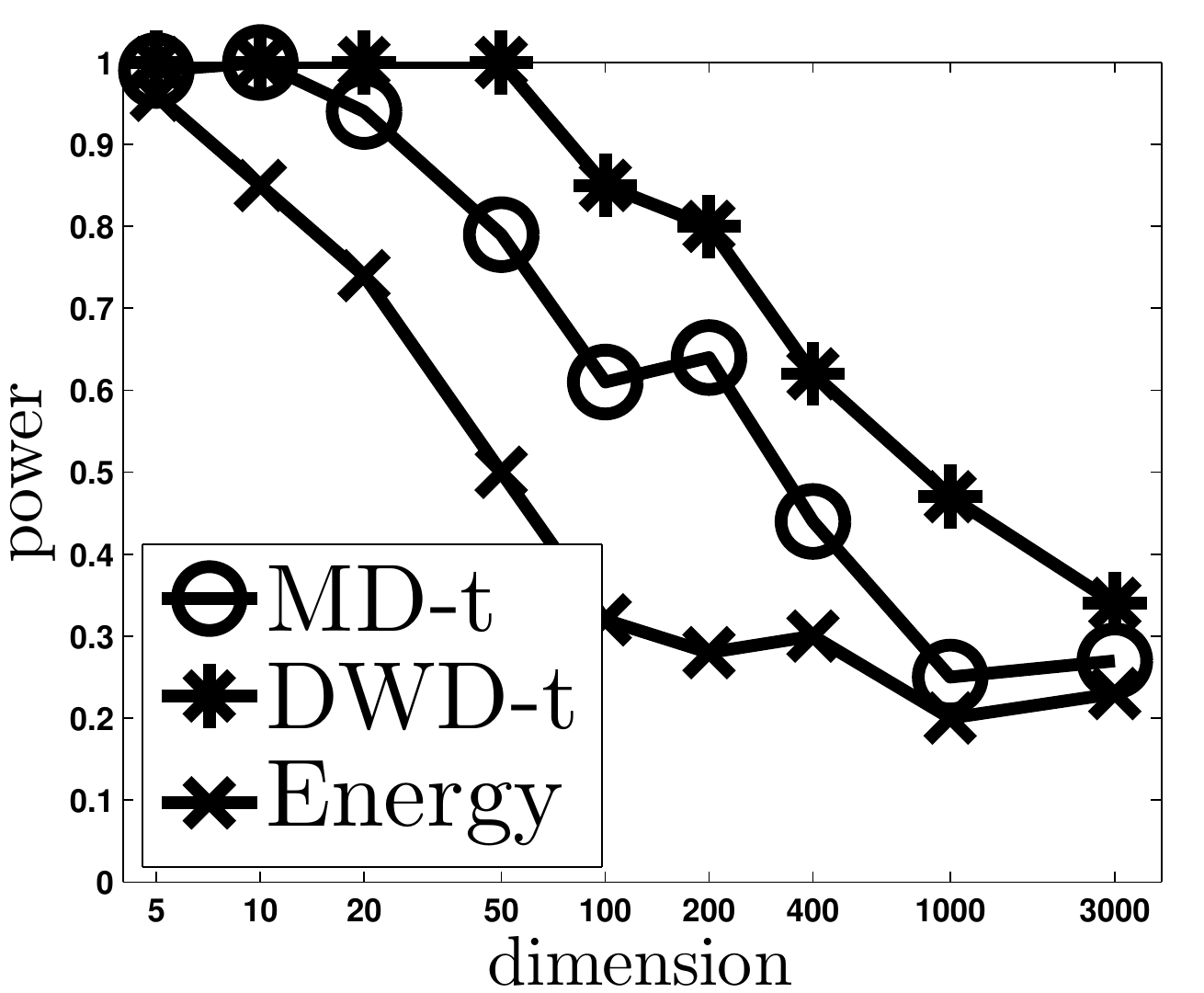}
}
\caption{Power comparison of DWD-t, MD-t and the energy test for testing equality of distributions under the various simulation settings in Table \ref{databank1}.}
\label{testEqualDist}
\end{figure}

\subsection{Equality of means}
\label{empStudyMeans}

In the RP test proposed by Lopes, Jacob and Wainwright, the data is first projected down to a dimension low enough so that the regular Hotelling $T^2$ statistic may be applied \citep{Lopes2011}. The projection matrix is a $k \times d$ matrix with iid $N(0,1)$ entries where $k$ is the dimension of the lower dimensional subspace. In our implementation of the RP method, we follow the authors' recommendation and set the tuning parameter $k = \lfloor{n/2}\rfloor$. The samples are assumed to arise from Gaussian distributions with equal covariances. The resulting statistic then follows an $F$ distribution under the null of equal means.
For all simulations in Figure \ref{EqualMeans1}, the sample sizes are set to be balanced: $m=50, n=50$. The standard multivariate Gaussian and the multivariate $t(5)^d$ both have mean zero, and thus the power of MD-MD and RP should be around $\alpha =0.1$ in simulation S1. The first panel of Figure \ref{EqualMeans1} shows this is indeed the case. Note that if MD-MD or RP were to be used for testing equality of distributions, neither would have power against alternatives such as in S1. 

In simulation S2, the RP method does not perform as well as MD-MD or DWD-MD. This is perhaps due to the DiProPerm tests being able to pick up the mean effect more efficiently than the RP method which tries to sense random directions in very high dimensions. Note that simulation S2 is a setting in which the MD statistic is powerful for either direction DWD or MD as the mean effect is strong. Re-examining Figure \ref{testEqualDist}, we see that the DWD-MD is more powerful than the DWD-t and the MD-MD more powerful than the MD-t for simulation S2. Recall that the covariance structure in S2 amplifies the mean effect. The DiProPerm tests that use the two-sample t-statistic may have lower power than their MD counterpart because the standardization in the t-statistic cancels out some of the effect.

In the Gaussian mixture S3 simulation, the DWD-MD and the RP test are both substantially more powerful than the MD-MD test. In this setting, the direction of discrimination is in the first coordinate direction but the direction of most variation is along the second coordinate. Not surprisingly, MD-MD has trouble in this setting. The RP test, which uses the Mahalanobis distance, is able to correct for this false signal in the second coordinate direction. DWD-MD is seen to perform slightly better than the RP test. Again, DWD is designed to work well in discriminating Gaussian mixture data types and this result matches our expectation. 

\begin{figure}[h]
\centering
\subfloat[S1]{
\includegraphics[width=.3\textwidth]{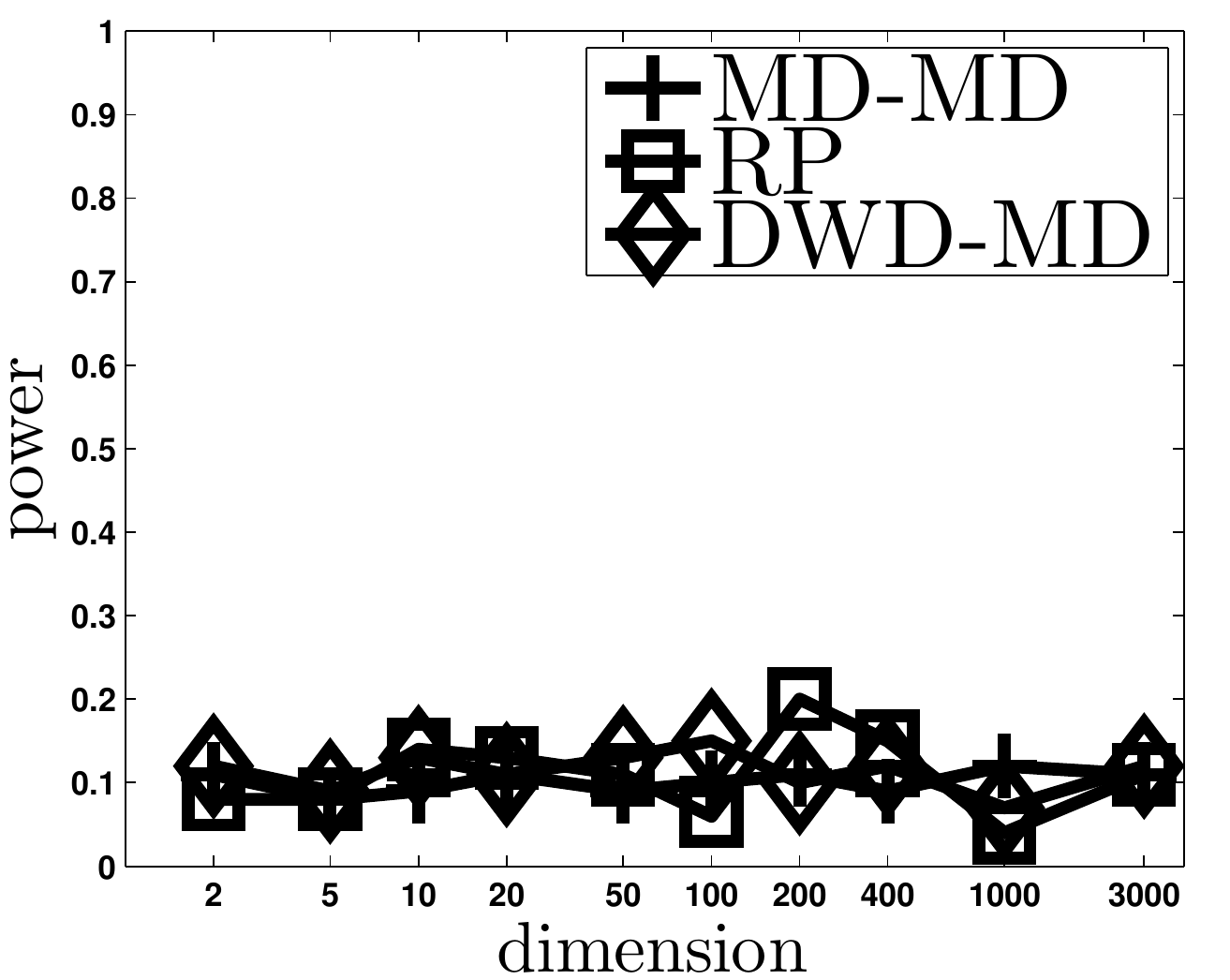}
}
\subfloat[S2]{
\includegraphics[width=.3\textwidth]{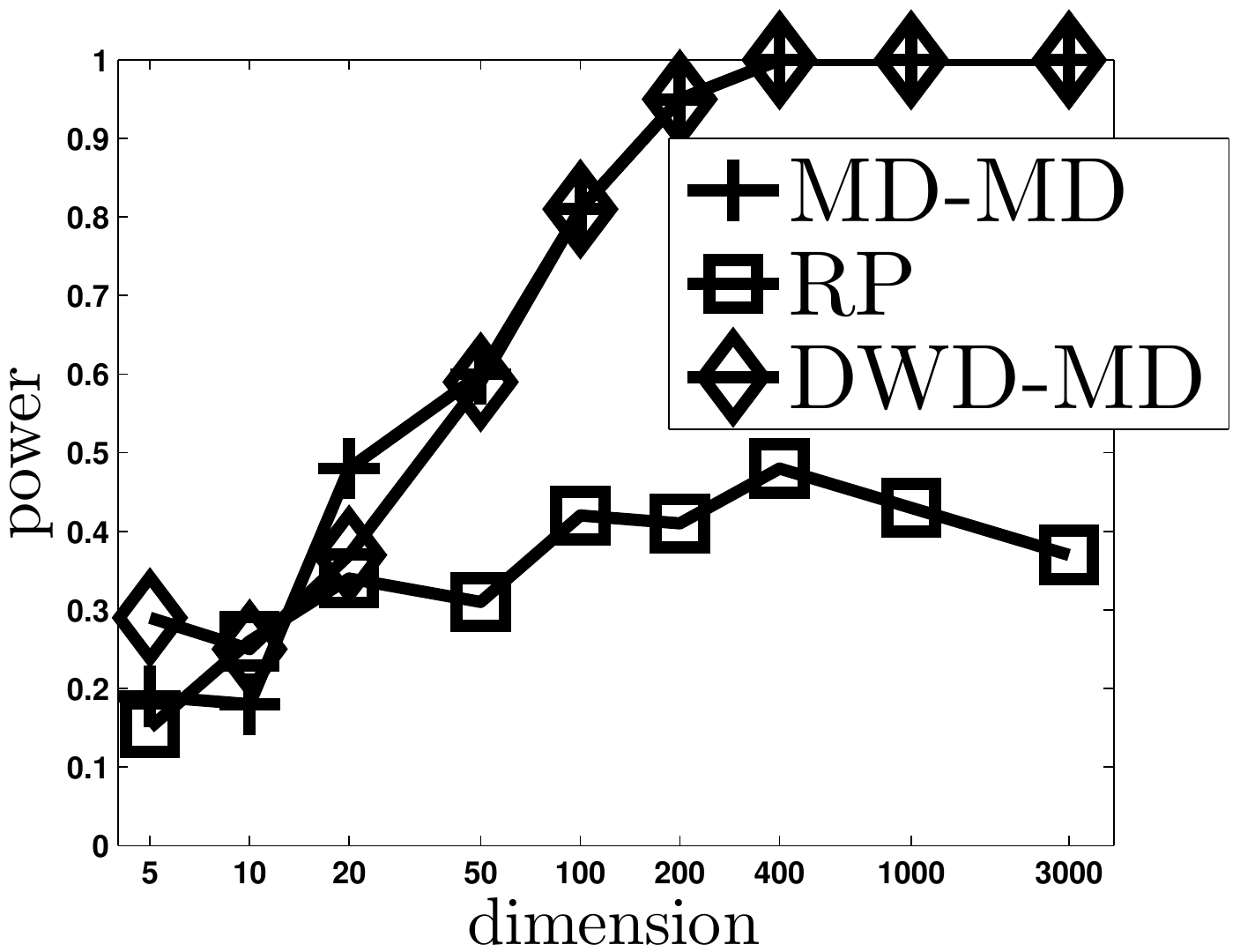}
} 
\subfloat[S3]{
\includegraphics[width=.3\textwidth]{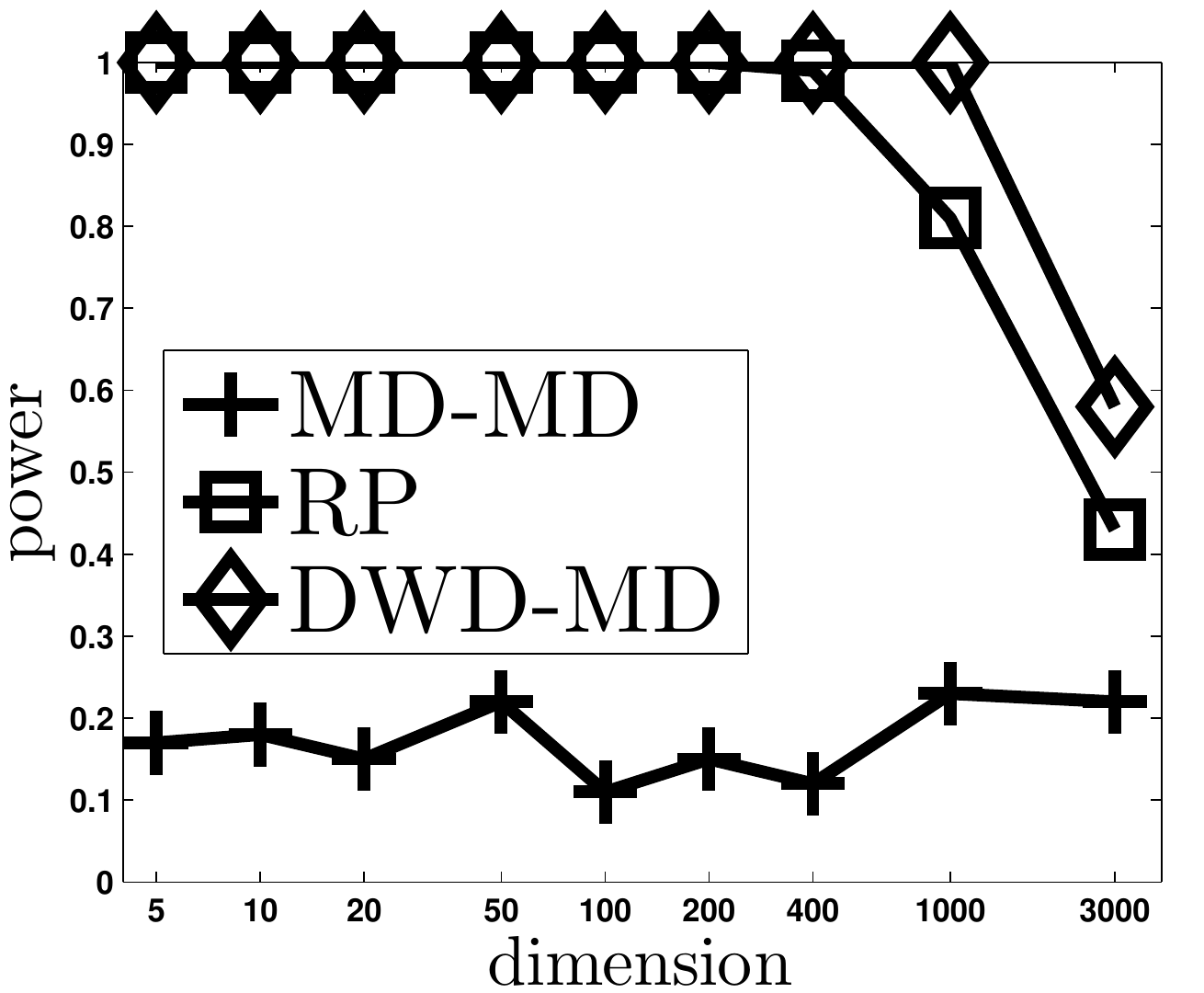}
}
\caption{Power comparison of MD-MD versus the RP method for testing equality of means under the various simulation settings in Table \ref{databank1}.}
\label{EqualMeans1}
\label{EqualMeans2}
\end{figure}

\section{Application: Microarray data analysis }
\label{App}

The first application of DiProPerm to a real dataset can be found in \cite{Wichers2007}. DiProPerm was applied to an HDLSS dataset and used to find a statistically significant difference between heart rates of rats among different treatment groups. In this section we will apply DiProPerm to a different kind of HDLSS data --- gene expression microarray data.

Two HDLSS datasets are examined. The first dataset is denoted UNCGEO and the second UNCUP, following the naming convention of their source which can be found at \verb|http://peroulab.med.unc.edu/|.
 The UNCGEO datasets consists of gene expression data of 9674 genes measured on 50 breast cancer patients at UNC. The UNCUP dataset looks at the same set of genes measured on 80 breast cancer patients in another study at UNC.
We performed many different hypotheses of interest within each dataset. We highlight two particular comparisons here which highlight the main point that formal hypothesis testing is an important component of visualization in high dimensions. 

The UNCGEO patients are divided into standard breast cancer subtypes: 1) Luminal A versus 2) Luminal B and the UNCUP data into the groups: 1) Luminals (Luminal A and Luminal B) versus 2) HER and Basal. Luminals have a very different gene expression signature from HER and Basal. On the other hand, the difference between Luminal A and Luminal B is less clear cut. For each dataset, we use DWD-t to test equality of distributions between the gene expression in group 1 and group 2. Note that we have a HDLSS setting here since the number of genes well exceeds the sample sizes in each subgroup.

Figure \ref{projMicroarray} shows the data projected onto DWD directions. The projections in the left panel do not overlap at all whereas the projections in the right panel have a small amount of overlap.  These projection plots suggest that the separation is better for Luminal A vs. Luminal B in the UNCGEO dataset than for Luminals vs. HER \& Basal in the UNCUP dataset. However as previously seen in the toy example in Section \ref{ACE}, great care is needed before drawing conclusions of this type.

\begin{figure}[h!]
\centering
\includegraphics[width=.75\textwidth]{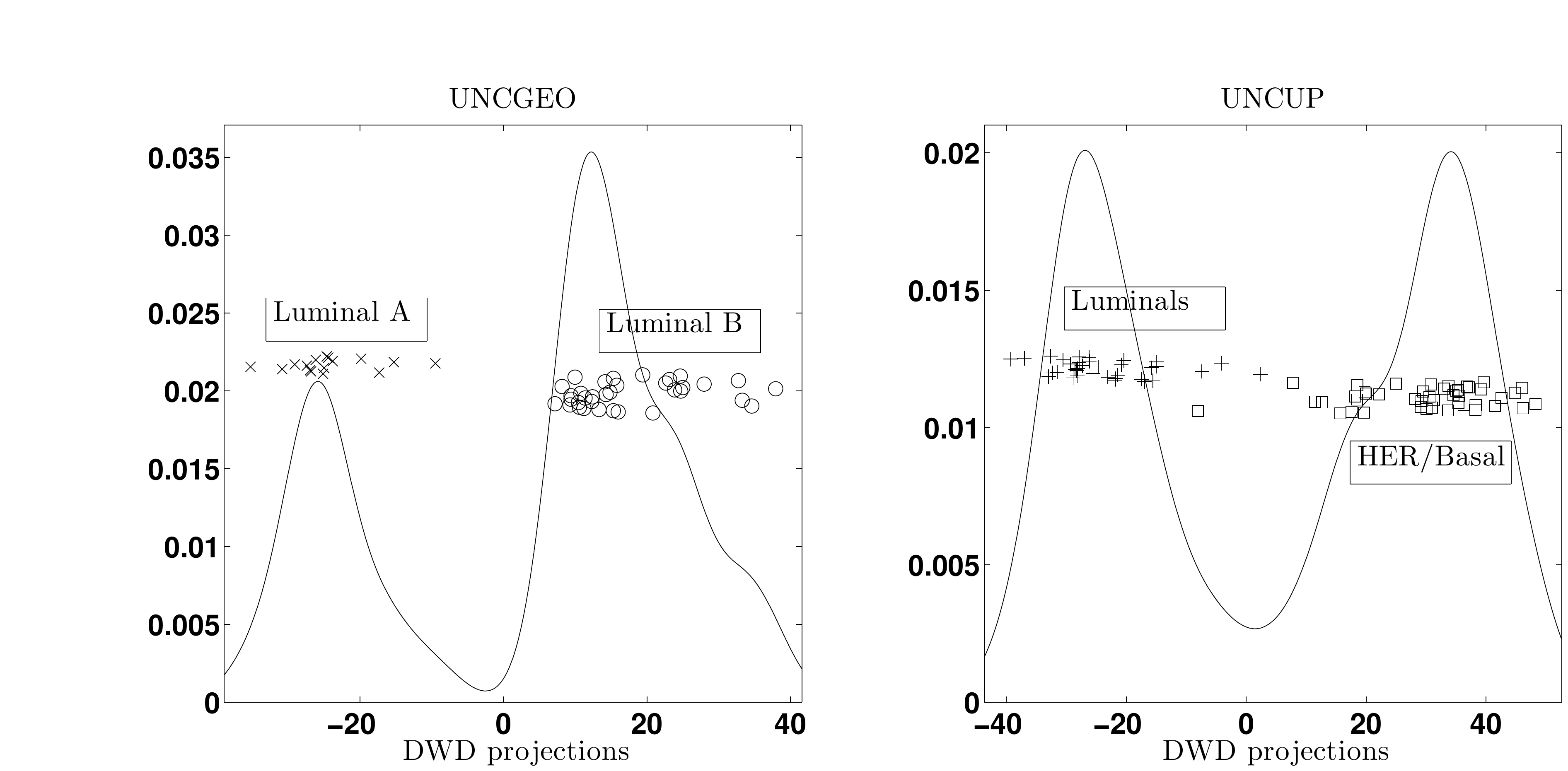}
\caption{ One dimensional projection plots onto DWD directions for the UNCGEO dataset and the UNCUP dataset. The separation in the projection plot for the UNCGEO dataset is more visually pronounced than in the UNCUP dataset. We will rigorously assess this visual result using DiProPerm.}
\label{projMicroarray}
\end{figure}

Figure \ref{permMicroarray} displays the DiProPerm test results.  Each dot represents the test statistic resulting from a single permutation in the permutation test. We mark the position of the original univariate t-statistic with a vertical dashed line. The empirical p-values show the difference in the UNCGEO dataset is not significant while the difference in the UNCUP dataset is very significant. (We also display the Guasisan fit p-value and Gaussian fit z-score, two other types of ``p-values" described in Section \ref{Perm} of the Supplement). This result on a real world dataset parallels what we saw on the simulated toy dataset in Section \ref{ACE} --- what may seem to be a visually striking separation in lower dimensional visualizations could well be an artifact of over-fitting or sampling variation.

\begin{figure}[h]
\centering
\includegraphics[width=.75\textwidth]{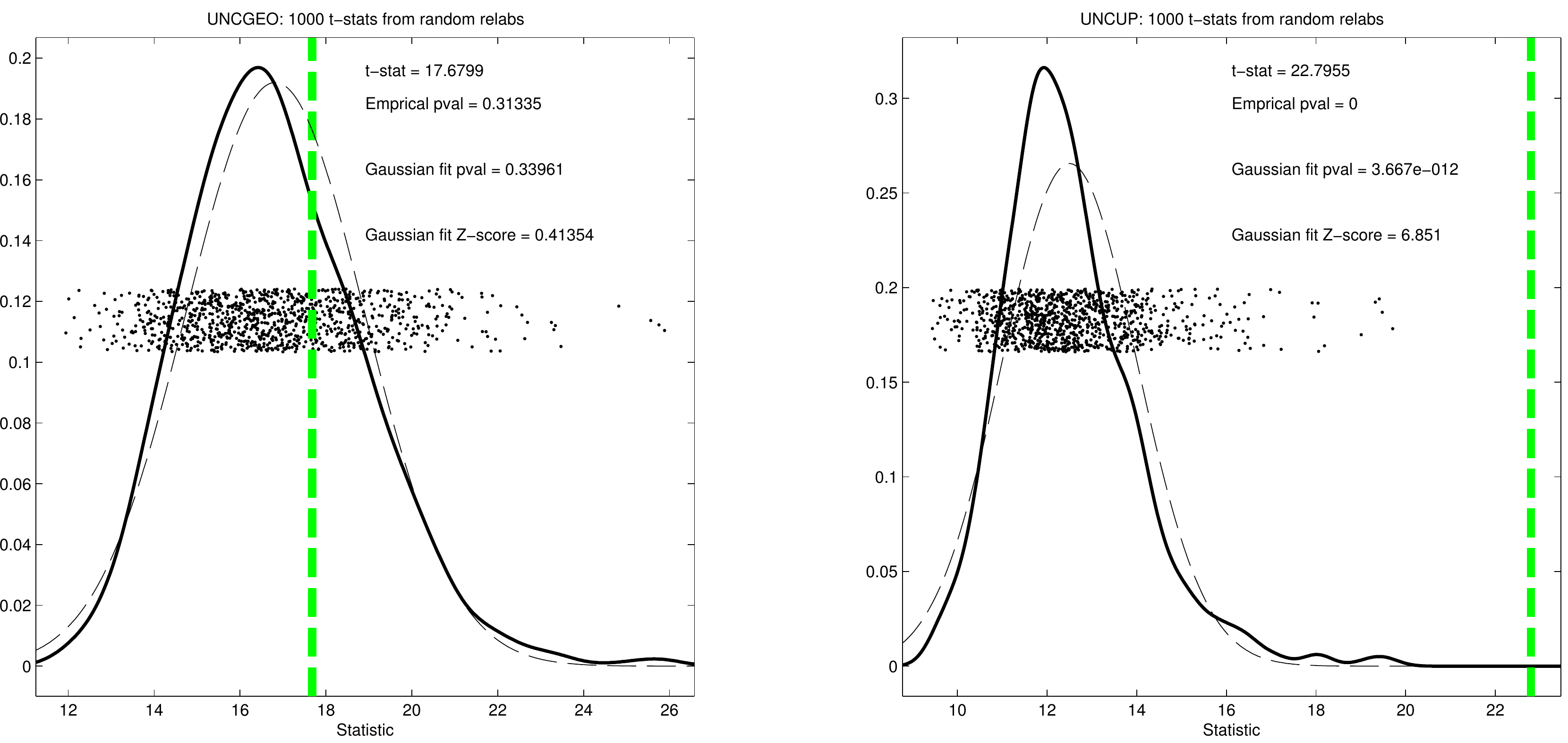}
\caption{DWD-t test result for the UNCGEO (left) and UNCUP (right) datasets. In the UNCGEO study (left), the difference between the Luminal A and Luminal B subgroups is not significant. In the UNCUP study (right), the difference between the Luminals and HER \& Basal subgroups is very significant. This is surprising because the projection plots in Figure \ref{projMicroarray} suggest the contrary.}
\label{permMicroarray}
\end{figure}

\section{Matlab Software}
Matlab software for DiProPerm is available at
\verb|http://www.unc.edu/~marron/marron_software.html|.

\noindent
\section*{Acknowledgements} 
The work presented in this paper was supported in part by the NSF Graduate Fellowship, and NIH grant T32 GM067553-05S1.

\bibliography{master_bibtex}

\newpage
\appendix

\section{Proofs}

\begin{lem}
Let $X_1, \ldots, X_m$ be a sample from the d-variate Gaussian distribution $N(\mu_x, \sigma_x^2 I_d)$ and $Y_1,\ldots,Y_n$ be an independent sample from the d-variate Gaussian distribution $N(\mu_y,\sigma_y^2 I_d)$ where $\sigma_x^2 \ne \sigma_y^2$. Let  $\tilde X_k = X_k' (\bar X - \bar Y)$. Let $\overline{\tilde X_{1:k-1}}$ be the sample mean of $\tilde X_1,\ldots \tilde X_{k-1}$. Under $\mu_x = \mu_y$, we have, for $k=2,\ldots, m$ 
\[
\frac{d^{-1/2} (  (\tilde{X}_{k} - \overline{\tilde X_{1:k-1}}) )}{\{\frac{k}{k-1} \sigma_x^2 (\sigma_x^2/m + \sigma_y^2/n)\}^{1/2}} \convd N(0,1) \text{ as } d \to \infty.
\]
Similarly we have 
\[
\frac{d^{-1/2} ( (\tilde{Y}_{k} - \overline{\tilde Y_{1:k-1}}) )}{\{\frac{k}{k-1} \sigma_y^2 (\sigma_x^2/m + \sigma_y^2/n)\}^{1/2}} \convd N(0,1) \text{ as } d \to \infty
\]
$k=2,\ldots, n$.
\label{projDiff}
\end{lem}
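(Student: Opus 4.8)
The plan is to rewrite the numerator as a single inner product and recognize a hidden independence. Write $W = \bar X - \bar Y \in \real^d$ and $\tau^2 = \sigma_x^2/m + \sigma_y^2/n$. Under $\mu_x = \mu_y$ the vector $W$ has iid $N(0,\tau^2)$ coordinates, so $W \ne 0$ almost surely and $\|W\|^2/d \to \tau^2$ in probability as $d \to \infty$ by the law of large numbers. Since $\tilde X_j = X_j' W$ for every $j$, the numerator telescopes,
\[
\tilde X_k - \overline{\tilde X_{1:k-1}} = \Big( X_k - \tfrac{1}{k-1}\textstyle\sum_{j=1}^{k-1} X_j \Big)' W =: U_k' W ,
\]
where $U_k := X_k - \overline{X_{1:k-1}}$ is a contrast of $X_1,\dots,X_k$ with mean $0$ and covariance $\sigma_x^2\big(1+\tfrac1{k-1}\big) I_d = \tfrac{k}{k-1}\sigma_x^2 I_d$. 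I would reduce the whole lemma to the limiting behaviour of $d^{-1/2} U_k' W$; the $Y$-statement then follows verbatim by relabelling.

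The crux is that $U_k$ and $W$ are independent. The concatenation $(X_1,\dots,X_m,Y_1,\dots,Y_n)$ is jointly Gaussian and both $U_k$ and $W$ are linear images of it, so $(U_k,W)$ is jointly Gaussian and it suffices to check $\covar(U_k,W)=0$. The coefficients of $U_k$ on $X_1,\dots,X_m$ are $-\tfrac1{k-1}$ for $j<k$, $1$ for $j=k$, and $0$ otherwise, and they sum to $0$; since $\bar X$ weights every $X_i$ by $1/m$, the $X$-contribution to the covariance vanishes, and $U_k$ is trivially independent of $\bar Y$. Hence $U_k \perp (\bar X,\bar Y)$, so $U_k \perp W$.

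With independence in hand I would condition on $W$. Put $e = W/\|W\|$; conditionally on $W$, $U_k' e$ is a unit-norm linear functional of $U_k \sim N(0,\tfrac{k}{k-1}\sigma_x^2 I_d)$, so $U_k' e \mid W \sim N(0,\tfrac{k}{k-1}\sigma_x^2)$, a law not depending on $W$. Therefore $G := U_k' e \,/\, (\tfrac{k}{k-1}\sigma_x^2)^{1/2}$ is standard normal and independent of $W$ for every $d$, and
\[
\frac{\tilde X_k - \overline{\tilde X_{1:k-1}}}{\sqrt d} \;=\; \Big( \tfrac{k}{k-1}\,\sigma_x^2\,\frac{\|W\|^2}{d} \Big)^{1/2} G .
\]
Since $\|W\|^2/d \to \tau^2$ in probability and $G$ is independent of $W$, Slutsky's theorem gives $d^{-1/2}(\tilde X_k - \overline{\tilde X_{1:k-1}}) \convd N\big(0,\tfrac{k}{k-1}\sigma_x^2\tau^2\big)$, and dividing by $\{\tfrac{k}{k-1}\sigma_x^2(\sigma_x^2/m+\sigma_y^2/n)\}^{1/2}$ yields the stated $N(0,1)$ limit.

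The only non-routine step is spotting that the contrast $U_k$ is independent of the mean-difference direction $W$ (uncorrelated plus jointly Gaussian); once this is seen, the problem collapses to a one-dimensional conditionally Gaussian quantity whose random variance is handled by the law of large numbers, and no delicate estimates are needed. One should just record that $k\ge 2$ is used so that $\overline{X_{1:k-1}}$ is defined and $\tfrac{k}{k-1}$ is finite.
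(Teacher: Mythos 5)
Your proof is correct, but it takes a genuinely different route from the paper's. The paper writes $\tilde X_k - \overline{\tilde X_{1:k-1}}$ as the sum over the $d$ coordinates of the products $(X_k-\bar X_{1:k-1})^{(p)}(\bar X-\bar Y)^{(p)}$, checks that the two factors in each summand are uncorrelated (hence independent, being jointly Gaussian), computes the variance of each product via the variance-of-a-product formula, and then invokes the central limit theorem over the $d$ iid summands. You instead keep the vectors intact: you establish the same independence fact at the vector level ($U_k=X_k-\overline{X_{1:k-1}}$ is independent of $W=\bar X-\bar Y$ because the contrast coefficients sum to zero), and then exploit the spherical covariance of $U_k$ to condition on $W$ and obtain the \emph{exact} finite-$d$ representation $d^{-1/2}(\tilde X_k - \overline{\tilde X_{1:k-1}}) = \bigl(\tfrac{k}{k-1}\sigma_x^2\,\|W\|^2/d\bigr)^{1/2}G$ with $G\sim N(0,1)$ independent of $W$, so that only the law of large numbers for $\|W\|^2/d$ and Slutsky are needed. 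Your argument buys more: it shows the statistic is exactly a scale mixture of normals for every $d$ and dispenses with the CLT entirely; the price is that it leans on the rotational invariance of the spherical Gaussian, whereas the paper's coordinatewise CLT argument would adapt more readily to non-Gaussian coordinates. Both proofs hinge on the same key observation --- the within-sample contrast is independent of the mean-difference direction --- and both correctly produce the variance $\tfrac{k}{k-1}\sigma_x^2(\sigma_x^2/m+\sigma_y^2/n)$; your remark that $k\ge 2$ is needed is a fair bookkeeping point that the paper leaves implicit.
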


\begin{proof}
We can write $\tilde{X}_{k} - \overline{\tilde X_{1:k-1}} $ as a sum of products
\begin{align}
\tilde{X}_{k} - \overline{\tilde X_{1:k-1}}  = \sum_{p=1}^d (X_k - \bar X_{1:k-1})^{(p)} (\bar X - \bar Y)^{(p)}
\label{componentDef}
\end{align}
where  $X^{(p)}$ simply refers to the $p$-th component in the $d$-dimensional vector $X$. The expectation of the summands in \eqref{componentDef} is zero:
\begin{align*}
E  (X_k - \bar X_{1:k-1})^{(p)} (\bar X - \bar Y)^{(p)} &= E (X_k^{(p)} \bar X^{(p)}) - E(\bar X_{1:k-1}^{(p)} \bar X^{(p)}) \\
&- E(X_k^{(p)} \bar Y^{(p)}) + E(\bar X_{1:k-1}^{(p)} \bar Y^{(p)}) \\
&=0
\end{align*}

Next we look at the variance of the summands. Recall for Gaussian data, zero covariance is equivalent to independence. We know the covariance between $(X_k - \bar X_{1:k-1})^{(p)}$ and $(\bar X - \bar Y)^{(p)}$ is zero since the expectation of the latter is zero and the expectation of the product was shown above to be zero as well. Thus each summand in \eqref{componentDef} is the product of two independent variables. The variance of a product of independent variables (see \ref{varProd} for a derivation), $U$ and $V$, is 
\begin{equation}
(EU)^2 Var(V) + (EV)^2 Var(U) + Var(U) Var(V).
\label{prodVar}
\end{equation}

Thus we have
\begin{align*}
Var  (X_k - \bar X_{1:k-1})^{(p)} (\bar X - \bar Y)^{(p)} &= Var  (X_k - \bar X_{1:k-1})^{(p)}Var (\bar X - \bar Y)^{(p)} \\
&= \frac{k}{k-1} \sigma_x^2 (\sigma_x^2/m + \sigma_y^2/n)
\end{align*}
By the Central Limit Theorem, we have
\[
\frac{d^{1/2} ( \frac{1}{d} (\tilde{X}_{k} - \overline{\tilde X_{1:k-1}}) )}{\{\frac{k}{k-1} \sigma_x^2 (\sigma_x^2/m + \sigma_y^2/n)\}^{1/2}} \convd N(0,1) \text{ as } d \to \infty
\]
\end{proof}

\begin{lem}
Let $X_1, \ldots, X_m$ be a sample from the d-variate Gaussian distribution $N(\mu_x, \sigma_x^2 I_d)$ and $Y_1,\ldots,Y_n$ be an independent sample from the d-variate Gaussian distribution $N(\mu_y,\sigma_y^2 I_d)$ where $\sigma_x^2 \ne \sigma_y^2$. Let $\pi$ be a permutation of $\{1,\ldots,N=m+n\}$. Let $\bar Z_\pi = ( \bar Z_{\pi(1:m)} - \bar Z_{\pi(m+1:N)}  )$ be the MD direction trained on the permuted labels determined by $\pi$. We have for $i=1,\ldots,m$, 
\[
E(({Z}_{\pi(i)} - \overline{Z_{\pi(1:m)}})^{(k)} \bar Z_\pi^{(k)}) 
\]
is non-zero.
Similarly, for $i=m+1,\ldots,N$, we have 
\[
E( ({Z}_{\pi(i)} - \overline{Z_{\pi(m+1:N)}})^{(k)} \bar Z_\pi^{(k)})
\]
is non-zero.

\label{nonzero}
\end{lem}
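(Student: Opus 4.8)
The plan is to reduce the statement to a one-dimensional second-moment computation. Because the $d$ coordinates of the pooled sample are mutually independent and, under $\mu_x=\mu_y$, identically distributed apart from the two possible variances, it suffices to work in the single fixed coordinate $k$; write $a_\ell := Z_{\pi(\ell)}^{(k)}$ for $\ell=1,\dots,N$. These are independent, with common mean $\mu:=\mu_x=\mu_y$ and with variances $v_\ell:=\Var(a_\ell)$ equal to $\sigma_x^2$ if $\pi(\ell)\le m$ and to $\sigma_y^2$ if $\pi(\ell)>m$. Both $\big(Z_{\pi(i)}-\overline{Z_{\pi(1:m)}}\big)^{(k)}$ and $\bar Z_\pi^{(k)}=\bar a-\bar b$ — where $\bar a,\bar b$ denote the averages of $a_1,\dots,a_m$ and of $a_{m+1},\dots,a_N$ — are contrasts in the $a_\ell$ (their coefficients sum to zero), so subtracting $\mu$ from every $a_\ell$ leaves the product untouched; hence I would assume $\mu=0$, which turns the target expectation into the covariance $\covar\big(a_i-\bar a,\ \bar a-\bar b\big)$ (respectively $\covar\big(a_i-\bar b,\ \bar a-\bar b\big)$ in the second-block case).

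For $i\in\{1,\dots,m\}$ the second block $\{a_{m+1},\dots,a_N\}$ is independent of the first, so $\bar b$ drops out and the covariance reduces to $\covar(a_i,\bar a)-\Var(\bar a)$. With independence and zero means this is $v_i/m-m^{-2}\sum_{j=1}^m v_j$, so the expectation equals
\[
\frac{1}{m}\Big(v_i-\frac{1}{m}\sum_{j=1}^{m}v_j\Big).
\]
Letting $r$ be the number of original $Y$-indices in the first permuted block, $\frac1m\sum_{j=1}^m v_j=\big((m-r)\sigma_x^2+r\sigma_y^2\big)/m$ is a convex combination of $\sigma_x^2$ and $\sigma_y^2$, and a \emph{strict} one — hence strictly between them — precisely when $0<r<m$; since $v_i$ is one of the two endpoints and $\sigma_x^2\ne\sigma_y^2$, the expectation is then non-zero for every $i=1,\dots,m$ (explicitly $r(\sigma_x^2-\sigma_y^2)/m^2$ when $\pi(i)\le m$, and $(m-r)(\sigma_y^2-\sigma_x^2)/m^2$ when $\pi(i)>m$). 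The case $i\in\{m+1,\dots,N\}$ is symmetric: the first block is now independent of $a_i-\bar b$, the covariance becomes $\Var(\bar b)-\covar(a_i,\bar b)=\tfrac1n\big(\tfrac1n\sum_{j=m+1}^N v_j-v_i\big)$, and the same convexity argument makes it non-zero whenever the second permuted block also contains observations of both types.

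The one point I would state with care — and what I see as the main obstacle to a blanket ``always non-zero'' claim — is non-degeneracy: the expectation genuinely vanishes for the unmixed permutations (those with $r\in\{0,m\}$, and likewise for a pure second block), but for such a $\pi$ the permuted sample is a mere relabeling of the original one, so the quantity in question coincides with the unpermuted quantity of Lemma \ref{projDiff} and nothing new is being asserted. Accordingly the lemma is to be read — and suffices for its role in Theorem \ref{MDtTheorem}, where it is this non-vanishing mean that forces the projected within-group deviations to grow like $d$ rather than $\sqrt d$ — for permutations that genuinely interleave the two samples in both blocks; under that reading the contrast reduction together with the elementary covariance bookkeeping above gives the result.
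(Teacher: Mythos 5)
Your proof is correct, but it takes a genuinely different route from the paper's, and the difference matters. You fix the permutation $\pi$, observe that both factors are contrasts in the pooled coordinates (so the common mean under $\mu_x=\mu_y$ drops out), and compute the expectation in closed form as $\frac{1}{m}\bigl(v_i-\frac{1}{m}\sum_{j=1}^m v_j\bigr)$, i.e.\ $r(\sigma_x^2-\sigma_y^2)/m^2$ or $(m-r)(\sigma_y^2-\sigma_x^2)/m^2$ according to the type of $Z_{\pi(i)}$, where $r$ counts the $Y$-observations in the first permuted block. The paper evaluates the same second moments but averages over a \emph{uniformly random} permutation --- that is what the weights $P(m-1,r)P(n,m-r)/w_1$, etc., in its final display are doing. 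Your fixed-$\pi$ version is the one the application actually needs: in the proof of Theorem~\ref{MDtTheorem}, Lemma~\ref{dot} is invoked for a given $\pi$ as $d\to\infty$, and the limiting constant is the square of the fixed-$\pi$ coordinate expectation. It is also the only version that can be non-zero: since $\sum_{i=1}^m\bigl(Z_{\pi(i)}-\overline{Z_{\pi(1:m)}}\bigr)\equiv 0$, the $m$ expectations in question always sum to zero, and under a uniform random $\pi$ the block slots are exchangeable, so each individual permutation-averaged expectation equals exactly $0$ (equivalently, the $\tfrac{m}{N}$- and $\tfrac{n}{N}$-weighted terms in the paper's final display cancel, as a short computation with the hypergeometric means confirms). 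Your explicit formula also surfaces the caveat the paper elides: for permutations whose relevant block is unmixed ($r\in\{0,m\}$, respectively a pure second block) the expectation genuinely vanishes, so the lemma must be read for permutations that interleave the two samples in the relevant block; your observation that the degenerate permutations merely reproduce the original-world configuration underlying Lemma~\ref{projDiff}, while the mixing case is what drives the $d^2$ rate in Theorem~\ref{MDtTheorem}, is the right way to read (and repair) the statement.
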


\begin{proof}
We prove the first statement. The second can be shown in a similar fashion. Let $P(n,k)$ denote the number of $k$ permutations of $n$, i.e.
\[
P(n,k) = n \cdot (n-1) \cdot (n-2) \cdots (n-k+1)
\]
We have for $i=1,\ldots,m$ and $k=1,\ldots,d$,
\begin{align*}
E(({Z}_{\pi(i)} - \overline{Z_{\pi(1:m)}})^{(k)} \bar Z_\pi^{(k)})  &= E ((Z_{\pi(i)} - \bar Z_{\pi(1:m)})^{(k)} ( \bar Z_{\pi(1:m)} -  \bar Z_{\pi(m+1:N)} )^{(k)}) \\
&= EZ_{\pi(i)}^{(k)} \bar Z_{\pi(1:m)}^{(k)} - EZ_{\pi(i)}^{(k)} \bar Z_{\pi(m+1:N)}^{(k)} - E( \bar Z_{\pi(1:m)}^{(k)} )^2 +  E \bar Z_{\pi(1:m)}^{(k)} \bar Z_{\pi(m+1:N)}^{(k)} \\
&= \frac{(EZ_{\pi(i)}^{(k)})^2}{m} +\frac{m}{m-1} \mu^2 - \mu^2 - E( \bar Z_{\pi(1:m)}^{(k)} )^2 + \mu^2\\
&= \frac{ var(Z_{\pi(i)}^{(k)}) + \mu^2}{m} + \frac{m}{m-1} \mu^2 - ( var (\bar Z_{\pi(1:m)}^{(k)} ) + \mu^2) \\
&= \frac{ var(Z_{\pi(i)}^{(k)}) }{m} - var (\bar Z_{\pi(1:m)}^{(k)} ) \\
&=\frac{m}{N} \{ \frac{\sigma_x^2}{m} - \frac{1}{m^2} \frac{1}{w_1}   \sum_{r=0}^{m-1} P(m-1,r) P(n,m-r) [r \sigma_x^2 + (m-r)\sigma_y^2] \} \\ 
&+\frac{n}{N} \{ \frac{\sigma_y^2}{m} - \frac{1}{m^2}  \frac{1}{w_2}   \sum_{r=0}^{n-1} P(n-1,r) P(m,m-r)  [r \sigma_y^2 + (m-r)\sigma_x^2] \} 
\end{align*}
where $w_1$ and $w_2$ are the weights
\[
w_1 :=   \sum_{r=0}^{m-1} P(m-1,r) P(n,m-r)  \quad \text{ and } \quad w_2:= \sum_{r=0}^{n-1} P(n-1,r) P(m,m-r)
\]
Thus if $\sigma_x^2 \ne \sigma_y^2$, we have $E(({Z}_{\pi(i)} - \overline{Z_{\pi(1:m)}})^{(k)} \bar Z_\pi^{(k)})$ is nonzero. 
\end{proof}

\begin{lem}
Let $Z_1,Z_2$ be two random variables in $\mathbb R^d$ such that $Z_1^{(k)}Z_2^{(k)}$ are i.i.d. for $k=1,\ldots,d$ and $E(Z_1^{(k)}Z_2^{(k)})$ exists and is finite. Then 
\[
\frac{1}{d^2} ( Z_1 \cdot Z_2 )^2  \to [E(Z_1^{(k)}Z_2^{(k)})]^2 \text{ in probability}
\]
\label{dot}
\end{lem}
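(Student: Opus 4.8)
The plan is to recognize the dot product $Z_1 \cdot Z_2 = \sum_{k=1}^d Z_1^{(k)} Z_2^{(k)}$ as a sum of $d$ i.i.d. summands with common finite mean $\mu := E(Z_1^{(k)} Z_2^{(k)})$, apply a weak law of large numbers to $\frac{1}{d}(Z_1 \cdot Z_2)$, and then pass to the square.

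First I would write out the dot product in coordinates, $Z_1 \cdot Z_2 = \sum_{k=1}^d W_k$ with $W_k := Z_1^{(k)} Z_2^{(k)}$. By hypothesis the $W_k$, $k=1,\ldots,d$, are i.i.d. with $E W_k = \mu$ finite, so Khintchine's weak law of large numbers (which requires only a finite first moment) gives
\[
\frac{1}{d}(Z_1 \cdot Z_2) = \frac{1}{d}\sum_{k=1}^d W_k \ \longrightarrow\ \mu \quad \text{in probability as } d \to \infty.
\]

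Second I would invoke the continuous mapping theorem: since $x \mapsto x^2$ is continuous on $\real$, convergence in probability of $\frac{1}{d}(Z_1 \cdot Z_2)$ to $\mu$ forces
\[
\frac{1}{d^2}(Z_1 \cdot Z_2)^2 = \Bigl(\tfrac{1}{d}(Z_1 \cdot Z_2)\Bigr)^2 \ \longrightarrow\ \mu^2 = [E(Z_1^{(k)} Z_2^{(k)})]^2 \quad \text{in probability},
\]
which is the claim.

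There is essentially no obstacle here; the only point needing a little care is that the statement assumes only a finite first moment for $W_k$, so one should cite the truncation version of the WLLN rather than the $L^2$ version. In the setting where this lemma is actually used the coordinates are Gaussian, so $W_k$ has a finite variance and one may instead argue directly by Chebyshev's inequality, since $\Var\bigl(\frac{1}{d}\sum_{k=1}^d W_k\bigr) = \Var(W_1)/d \to 0$; either route closes the argument.
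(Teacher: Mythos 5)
Your proof is correct and follows the same route as the paper's: the weak law of large numbers applied to $\frac{1}{d}\sum_{k=1}^d Z_1^{(k)}Z_2^{(k)}$ followed by the continuous mapping theorem for $x \mapsto x^2$. Your added remark that Khintchine's WLLN suffices under only a finite first moment is a worthwhile precision the paper glosses over, but it does not change the argument.
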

\begin{proof} 
By the Law of Large Numbers, we have
\[
\frac{ 1}{d} (Z_1 \cdot Z_2) \to E(Z_1^{(k)}Z_2^{(k)}) \text{ in probability.}
\]
By Continuous Mapping Theorem, we have
\[
\frac{ 1}{d^2} (Z_1 \cdot Z_2)^2 \to [E(Z_1^{(k)}Z_2^{(k)})]^2 \text{ in probability.}
\]
\end{proof}

Now we have all the necessary ingredients to prove Theorem \ref{MDtTheorem} in Section \ref{valMDt}. 

\begin{proof}[Proof of Theorem \ref{MDtTheorem}]

To prove the first part of Theorem \ref{MDtTheorem}, we decompose $s_{\tilde X}^2$ and $s_{\tilde Y}^2$ into a sum of independent variables. Let $\overline{\tilde{X}}_{k-1}$ be the sample mean of the first $k-1$ projections $\tilde X_1,\ldots \tilde X_{k-1}$. We will write $s_{\tilde X}^2$ in a recursive fashion. Define $s_1^2 = 0$. We will use the following recursive formula to define $s_k^2$ for $k = 2,\ldots,m$
\begin{equation}
(k-1) s_k^2 = (k-2) s_{k-1}^2 + \frac{k-1}{k} ( \tilde{X}_k - \overline{\tilde{X}}_{k-1})^2
\label{recur}
\end{equation}
Since $s_{k-1}^2$ is independent of $( \tilde{X}_k - \overline{\tilde{X}}_{k-1})^2$, this recursive viewpoint allows us to decompose $s_{\tilde X}^2 = s_m^2$ into a sum of independent terms. Using the result in Lemma \ref{projDiff} and the second-order Delta method, we have
\begin{equation}
\frac{ \frac{1}{d} (\tilde{X}_{k} - \overline{\tilde X_{1:k-1}})^2 }{\frac{k}{k-1} \sigma_x^2 (\sigma_x^2/m + \sigma_y^2/n)}  \convd \chi^2(1) \text{ as } d \to \infty
\label{inc}
\end{equation}
Inputting expression \eqref{inc} into the recursion defined in \eqref{recur} and exploiting the independence of the individual terms in $s_{\tilde{X}}^2$, we get
\begin{equation*}
\frac{1}{d} s_{\tilde{X}}^2  \convd \frac{1}{m-1} \sigma_x^2 (\frac{\sigma_x^2}{m} + \frac{\sigma_y^2}{n}) \chi^2(m-1)  \text{ as } d \to \infty
\end{equation*}
Similarly, we can show for the sample of projections $\tilde Y_1,\ldots, \tilde Y_n$,
\begin{equation*}
\frac{1}{d} s_{\tilde{Y}}^2  \convd \frac{1}{n-1} \sigma_y^2 (\frac{\sigma_x^2}{m} + \frac{\sigma_y^2}{n}) \chi^2(n-1)  \text{ as } d \to \infty
\end{equation*}
Thus we have
\begin{align*}
\frac{1}{d} S_{m,n}(Z) &= \frac{1}{d} \left( \frac{s_{\tilde{X}}^2 }{m} + \frac{s_{\tilde{Y}}^2 }{n} \right) \\
& \convd \frac{1}{m-1} \frac{\sigma_x^2}{m} (\frac{\sigma_x^2}{m} + \frac{\sigma_y^2}{n}) \chi^2(m-1) +  \frac{1}{n-1} \frac{\sigma_y^2}{n} (\frac{\sigma_x^2}{m} + \frac{\sigma_y^2}{n}) \chi^2(n-1)
\end{align*}

For the second part in Theorem \ref{MDtTheorem}, we expand the sample variance of the projected values in the permuted group as follows:
\begin{align*}
s_{{\tilde Z_{\pi(1:m)}}}^2 &= \frac{1}{m-1} \sum_{i=1}^m (\tilde{Z}_{\pi(i)} - \overline{\tilde Z_{\pi(1:m)}})^2 \\
&= \frac{1}{m-1} \sum_{i=1}^m ( ({Z}_{\pi(i)} - \overline{Z_{\pi(1:m)}}) \cdot ( \bar Z_{\pi(1:m)} - \bar Z_{\pi(m+1:N)}  )  )^2 \\
\end{align*}
Lemma \ref{nonzero} shows $E (Z_{\pi(i)} - \bar Z_{\pi(1:m)})^{(k)} ( \bar Z_{\pi(1:m)} -  \bar Z_{\pi(m+1:N)} )^{(k)}$ is nonzero. Now apply Lemma \ref{dot} with $Z_1 = ({Z}_{\pi(i)} - \overline{Z_{\pi(1:m)}}) $ and $Z_2 = ( \bar Z_{\pi(1:m)} - \bar Z_{\pi(m+1:N)}  )$ to see that $\frac{1}{d^2} s_{{\tilde Z_{\pi(1:m)}}}^2 $ converges in probability to a nonzero constant. A similar argument can be applied to $s_{{\tilde Z_{\pi(m+1:N)}}}^2$. Combining these results, it immediately follows that $\frac{1}{d^2} S_{m,n}(Z_\pi)$ converges in probability to a nonzero constant.

\end{proof}

\section{MD-scaled MD}
\label{MDsMD}
We established in Section \ref{valMDMD} that under certain conditions, the MD-MD test is valid when sample sizes are balanced. Under these same conditions, MD-MD is no longer a valid test however when sample sizes are unbalanced. Here we propose a modification of MD-MD, called MD-scaled MD, that is asymptotically valid, as $m,n \to \infty$ for fixed $d$, for equality of means when covariances are unequal and sample sizes are unbalanced. 

We have chosen the classical asymptotic regime here to take advantage of the following results. Janssen proved the permutation test for equality of means based on the studentized statistic, 
\begin{equation}
{ m^{1/2} (\bar X - \bar Y)}/{\{ s_x^2 + \frac{m}{n} s_y^2\}^{1/2} }
\label{MDstud}
\end{equation} 
where $s_x^2$ and $s_y^2$ are the standard unbiased estimators of $\sigma_x^2$ and $\sigma_y^2$, is asymptotically valid as $m,n \to \infty$ for the univariate case \citep{Janssen1997}. Janssen's result easily extends to the multivariate case if we assume a spherical covariance structure. Let $X_1, \ldots, X_m$ be a sample from a d-variate distribution with mean and covariance $(\mu_X, \sigma_x^2 I_d)$ and $Y_1,\ldots,Y_n$ be an independent sample with mean and covariance $(\mu_Y,\sigma_y^2 I_d)$. We propose the MD-scaled MD DiProPerm test whereby the MD direction is used in Step 1 of DiProPerm and a scaled MD statistic as in Equation \eqref{MDstud} is used in Step 2. The MD-scaled MD statistic is
\begin{equation}
{T_{m,n}(Z)}/{\{ {s_x^2}/{m} + {s_y^2}/{n} \}^{1/2}}
\label{MDsMDeq}
\end{equation}
where $T_{m,n}(Z)$ is as in Section \ref{valMDMD}. The asymptotic validity of the MD-scaled MD statistic (as $m,n \to \infty$) follows immediately from Janssen's result. Note that normality is not an assumption here. 

\begin{figure}[h]
\centering
\subfloat[MD-MD: Unbalanced]{
\includegraphics[width=.3\textwidth]{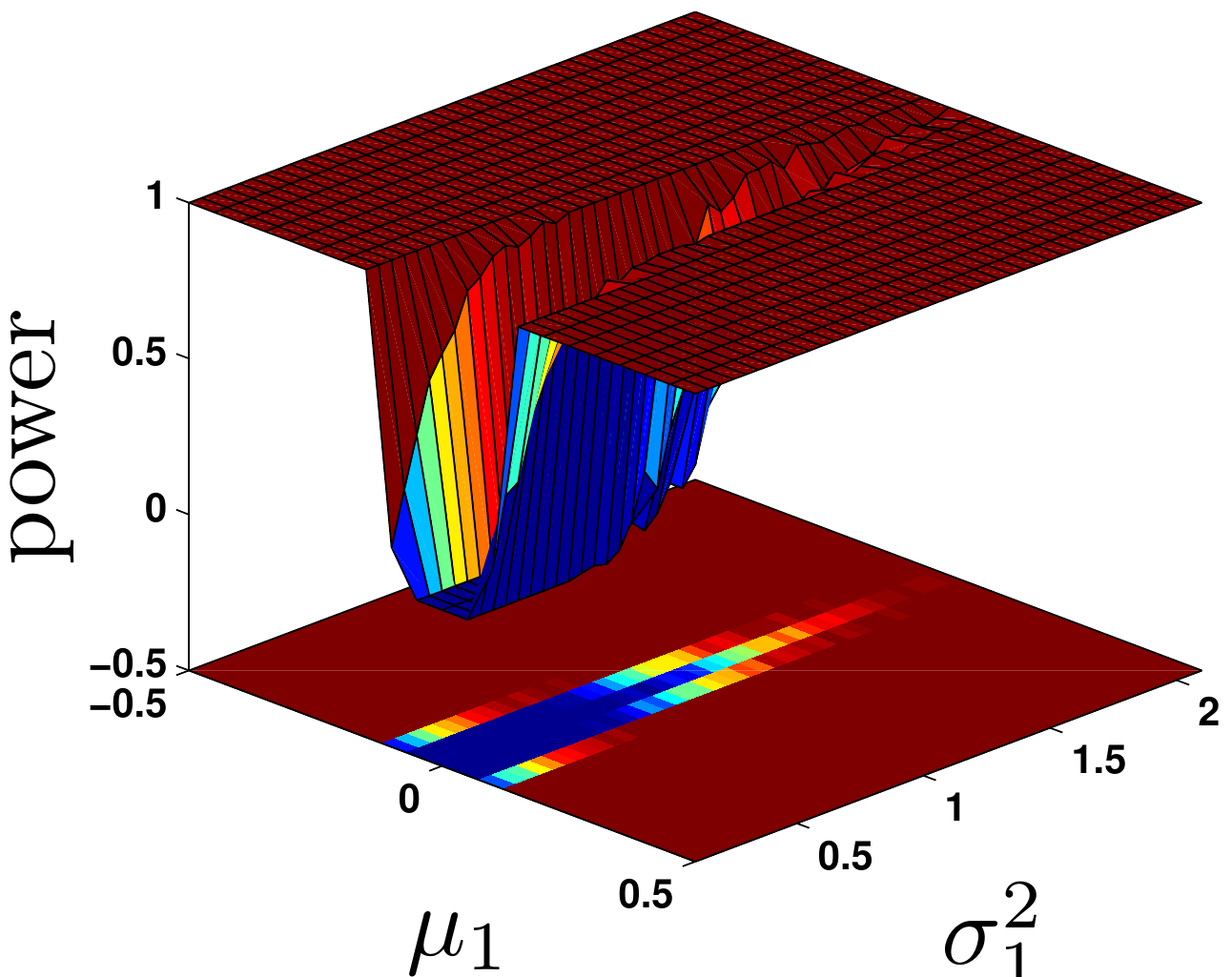}
\label{unequalPowerMDMD}
}
\subfloat[MD-scaled MD: Unbalanced]{
\includegraphics[width=.3\textwidth]{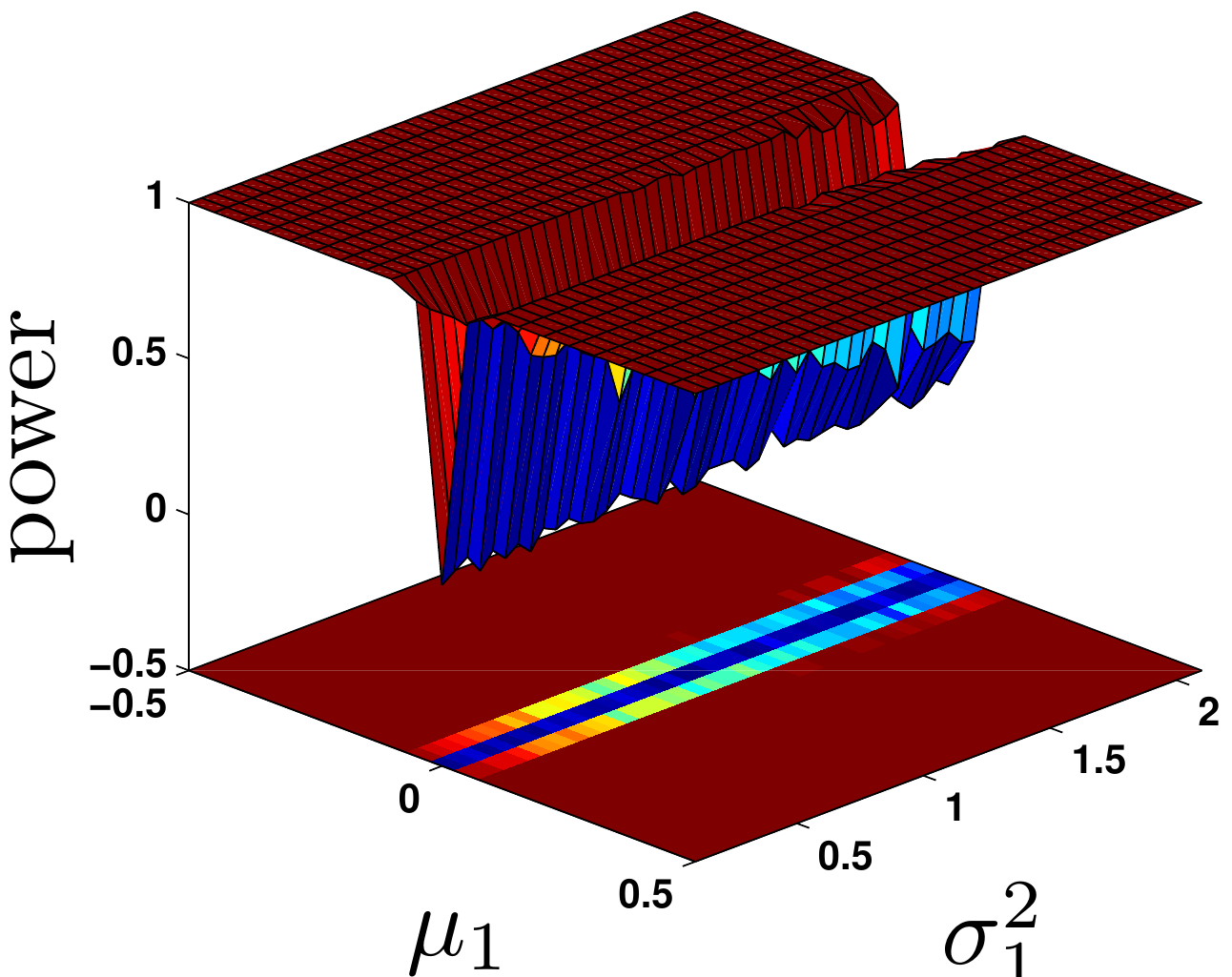}
\label{powerMDsMD}
}
\caption{Power surface of the MD-MD and the MD-scaled MD for testing equality of means for distributions $F_1=N(\mu_1,\sigma_1^2I_d)$ and $F_2=N(0,I_d)$. When sample sizes are unbalanced, the MD-scaled MD test attains the correct level while the MD-MD does not.}
\label{equalPower}
\end{figure}

We study the empirical power of the MD-MD and MD-scaled MD for testing equality of means when sample sizes are unbalanced. We set $m = 50, n=100$ and make $m$ draws from $F_1 = N(\mu_1,\sigma_1^2 I_d)$ and $n$ draws from $F_2 = N(0,I_d)$ for $d=500$. The sample sizes and dimension are chosen to reflect a HDLSS setting. The significance level is set at $\alpha = 0.05$. Power is estimated using 1000 Monte Carlo simulations and displayed using a color spectrum from cool to warm, corresponding to the range 0 to 1.

Figure \ref{equalPower}, as in the figures in Section \ref{powersufs}, displays the simulated power surface of MD-MD and MD-scaled MD. We see that when sample sizes are unbalanced and covariances unequal $(\sigma_1^2 \ne 1)$, MD-MD does not attain the correct level. Indeed MD-MD will reject increasingly often as the signal in $\sigma_1^2$ grows. On the other hand, we see from Figure \ref{powerMDsMD} that the MD-scaled MD test attains the correct level under unbalanced sample sizes. This simulated power study also suggests that the asymptotics for the MD-scaled MD test is in effect for relatively small sample sizes and a much larger dimension.

\section{Additional Implementation Options}
\label{impl}
\subsection{Direction}
\label{Di}
The following binary linear classifiers are among many possible choices for the direction vector used in Step 1 of DiProPerm and all are implemented in the DiProPerm software:
\begin{enumerate}
\item The Mean Difference method is a simple binary linear classifier, also called the centroid method \citep{Hastie2003}, where points are assigned to the class whose centroid is closest. The normal vector to the separating hyperplane is the unit vector in the direction of the line segment connecting the centroids of each class, $(\bar X-\bar Y)$. 
\item Fisher Linear Discrimination (FLD) was an early binary linear classification method, see Chapter 11 of \cite{Mardia1979} for an introduction. FLD seeks a separation that maximizes the between sum-of-squares of the two classes while minimizing the within sum-of-squares of each class. The normal vector to the separating hyperplane is the unit vector in the direction of $W^{-1} (\bar X-\bar Y)'$ where $W$ is the $d \times d$ matrix
\[
W = \sum_{i=1}^m (X_i - \bar X)(X_i - \bar X)' + \sum_{j=1}^n (Y_j - \bar Y)(Y_j - \bar Y)'
\]
\item Support Vector Machine (SVM) is a popular binary linear classification method that minimizes training error while maximizing the margin between the two classes. See \cite{Hastie2003} for a good introduction. 
\item Distance Weighted Discrimination (DWD) is a binary linear classifier similar to SVM except each data point has some weight in the final classifier \citep{Marron2007}. DWD better avoids the data piling problem exhibited by SVM in high dimensions. 
\item Maximal Data Piling (MDP) is a binary linear classifier such that the projections of the data points from each class onto its normal direction vector have two distinct values \citep{Ahn2010}. 
\end{enumerate}
Notice that we have not included any PCA directions on this list. This is because PCA is tailored to find directions that show maximal variation, which is different from our objective of finding directions that show separation between the two-samples. A more serious disadvantage to using PCA as the direction in step 1 of DiProPerm is that the univariate two-sample test statistic calculated in step 2 of DiProPerm would be invariant under relabelings. 
\subsection{Projection and univariate statistic}
\label{Pro}
In the second step of DiProPerm, we project the data onto the direction in step one and compute a univariate two-sample statistic on the projected values. Large values of the test statistic indicate departure from the null hypothesis. The following univariate two-sample statistics are among many reasonable choices for the DiProPerm procedure and all are implemented in the DiProPerm software.
\begin{enumerate}
\item Two Sample t statistic
\item Difference of sample means
\item Difference of sample means scaled, as in Equation \eqref{MDsMDeq}
\item Difference of sample medians
\item Difference of sample medians, divided by the  median absolute deviation.
\item Area Under the Curve (AUC), from Receiver Operating Characteristic (ROC) curve
\item Paired sampling t-statistic
\end{enumerate}
It is of interest to note that the classical Hotelling $T^2$ statistic is a special case of the general DiProPerm framework. The FLD direction vector and the difference of sample means combination gives the statistic $(\bar X - \bar Y) W^{-1} (\bar X - \bar Y)'$. This is in fact the Hotelling $T^2$ test statistic scaled by a factor of $\frac{1}{n-2}\frac{n}{n_1 n_2}$. To see this, recall the Hotelling $T^2$ statistic is
\[
T^2 = \frac{n_1 n_2}{n} (\bar X - \bar Y) S_u^{-1} (\bar X - \bar Y)'
\]
where 
\[
S_u = \frac{\sum_{i=1}^m (X_i - \bar X)(X_i - \bar X)' + \sum_{j=1}^n (Y_j - \bar Y)(Y_j - \bar Y)'}{n-2} = \frac{W}{n-2}
\]
The MD-FLD statistic is $\frac{1}{n-2}\frac{n}{n_1 n_2} T^2$ .

\subsection{Permutation}
\label{Perm}
In the final step of DiProPerm, an approximate permutation test is conducted to assess the significance of the test statistic in step two. Our permutation test is approximate because we perform a large number of random rearrangements of the labels on the observed data points, rather than all possible rearrangements. There are three kinds of indicators we commonly use and all are implemented in the DiProPerm software:
\begin{enumerate}
\item[1.] Empirical p-value: this is calculated as the proportion of the rearrangement test statistics that exceed the original test statistic. The empirical $p$-value has the disadvantage of often being zero. We may wish to compare two separations to see which is more significant. This motivates the next quantity.
\item[2.] Gaussian fit p-value: we fit a Gaussian distribution to the permutation test statistics and based on this calculate the percentage of rearrangement test statistics that exceed the original test statistic. (The term p-value is used loosely here). We do this not because we believe the permutation statistics are actually Gaussian, but because this provides a basis on which we can compare two DiProPerm results. In certain settings where the Gaussian fit $p$-value may suffer from round-off error, we use the next quantity as an alternative. 
\item[3.] $z$-score: we fit a Gaussian distribution to the permutation test statistics and calculate the corresponding $z$-score of the original test statistic with respect to the fitted distribution. 
\end{enumerate}
When interpreting the results of DiProPerm tests, it is generally useful to print all three indicators. When it is non-zero, the empirical $p$-value is the most interpretable. When it is zero we next look to the Gaussian fit p-value. Finally if the Gaussian fit p-value suffers from round-off error, the z-score is preferable.

\section{HDLSS calculations}
Let $X \sim N(0,\sigma_x^2 I_d)$ and $Y \sim N(0,\sigma_y^2 I_d)$. We will study the asymptotic behavior of the distance between $X$ and $Y$. We have simply by definition
\[
|| X - Y||^2 / (\sigma_x^2 + \sigma_y^2) \sim \chi^2(d).
\]
Then by the Central Limit Theorem, 
\[
\sqrt d \left( \frac{|| X - Y||^2 / (\sigma_x^2 + \sigma_y^2)}{\sqrt 2 d} - \frac{1}{\sqrt 2} \right) \to N(0,1)
\]
as $d \to \infty$. Applying the Delta Method, we get
\[
\sqrt d \left( \frac{|| X - Y ||}{2^{1/4} \sqrt{ (\sigma_x^2 + \sigma_y^2) d	}} - \frac{1}{2^{1/4}} \right) = O_P(1)
\]
and thus 
\[
|| X - Y || = \sqrt{ (\sigma_x^2+\sigma_y^2) d } + O_P(1)
\]

\end{document}